\newtheorem{lemma}{Lemma}
\newtheorem{theorem}{Theorem}
\newtheorem{definition}{Definition}
\newtheorem{corollary}{Corollary}
\newtheorem{invariant}{Invariant}
\newcommand{\intDDG}{\ensuremath{\mathit{int}\mathrm{DDG}}}
\newcommand{\extDDG}{\ensuremath{\mathit{ext}\mathrm{DDG}}}
\newcommand{\into}[1]{\ensuremath{\mathit{int}(#1)}}
\newcommand{\exto}[1]{\ensuremath{\mathit{ext}(#1)}}
\newcommand{\intc}[1]{\ensuremath{\overline{\mathit{int}}(#1)}}
\newcommand{\extc}[1]{\ensuremath{\overline{\mathit{ext}}(#1)}}
\newcommand{\lca}[2]{\ensuremath{\mathit{lca}(#1,#2)}}
\newcommand{\jump}{\ensuremath{\mathit{jump}}}
\title{Min $st$-Cut Oracle for Planar Graphs with Near-Linear Preprocessing Time}
\author{Glencora Borradaile\thanks{Oregon State University} \and
      Piotr Sankowski\thanks{University of Warsaw}
      \and
        Christian Wulff-Nilsen
        \thanks{University of Copenhagen                 }}
\begin{document}

\maketitle

\begin{abstract}
  For an undirected $n$-vertex planar graph $G$ with non-negative edge-weights,
  we consider the following type of query: given two vertices $s$ and
  $t$ in $G$, what is the weight of a min $st$-cut in $G$? We show how
  to answer such queries in constant time with $O(n\log^4n)$
  preprocessing time and $O(n\log n)$ space.  We use a Gomory-Hu tree
  to represent all the pairwise min cuts implicitly. Previously,
  no subquadratic time algorithm was known for this problem.  Since
  all-pairs min cut and the minimum cycle basis are dual problems in
  planar graphs, we also obtain an implicit representation of a
  minimum cycle basis in $O(n\log^4n)$ time and $O(n\log n)$ space. Additionally, an
  explicit representation can be obtained in $O(C)$ time and space where
  $C$ is the size of the basis.

  These results require that shortest paths are unique. This can be
  guaranteed either by using randomization without overhead, or
  deterministically with an additional $\log^2 n$ factor in the
  preprocessing times.
\end{abstract}

\section{Introduction}

A minimum cycle basis is a minimum-cost representation of all the
cycles of a graph and the all-pairs min cut problem asks to find all
the minimum cuts in a graph.  In planar graphs the problems are
intimately related (in fact, equivalent~\cite{HM94}) via planar
duality. In this paper, we give the first sub-quadratic algorithm for these problems; it
runs in $O(n \log^4 n)$ time. This result is randomized, but can be
made deterministic by paying an additional $\log^2n$ factor in the
running time. In the following, we consider connected, undirected graphs with
non-negative edge weights.

\subsubsection*{All-pairs minimum cut}

In the {\em all-pairs min cut problem} we need to find the minimum $st$-cut
for every pair $\{s,t\}$ of vertices in a graph $G$. Gomory and Hu~\cite{GH61}
showed that these minimum cuts can be represented by an edge-weighted tree such that:
\begin{itemize}
\item the nodes of the tree correspond one-to-one with the vertices of $G$,
\item for any distinct vertices $s$ and $t$, the minimum-weight edge
  on the unique $s$-to-$t$ path in the tree has weight equal to the
  min $st$-cut in $G$, and
\item removing this minimum-weight edge from the tree
  creates a partition of
  the nodes into two sets
  corresponding to a min $st$-cut in $G$.
\end{itemize}
We call such a tree a {\em Gomory-Hu} tree or GH tree -- it is also referred
to as cut-equivalent and cut tree in the literature. Gomory and Hu
 showed how to find such a tree with $n-1$ calls to a minimum cut
algorithm.  Up to date, this is the best known method for general graphs
and gives an $O(n^2 \log \log n)$-time algorithm for planar graphs
using the best-known algorithm for min $st$-cuts in undirected planar
graphs~\cite{INSW11}. There exists an algorithm for unweighted,
general graphs that beats the $n-1$ times minimum cut time
bound~\cite{BHPT07}; the corresponding time for unweighted planar graphs is,
however, $O(n^2 \mathrm{poly}\log n)$ time.

\subsubsection*{Minimum cycle basis}

A cycle basis of a graph is a maximum set of independent cycles.  Viewing a
cycle as an incidence vector in $\{0,1\}^E$, a set of cycles is
independent if their vectors are independent over $GF(2)$.  The weight
of a set of cycles is the sum of the weights of the cycles.  The {\em
  minimum-cycle basis (MCB) problem} is to find a cycle basis of
minimum weight.  This problem dates to the electrical circuit theory
of Kirchhoff~\cite{Kirchhoff1847} in $1847$ and has been used in the
analysis of algorithms by Knuth~\cite{Knuth68}.  For a complete survey,
see \citen{Horton87}.  The best known algorithm in general graphs takes
$O(m^\omega)$ time where $\omega$ is the exponent for matrix
multiplication~\cite{AIJKR09}.

The best MCB algorithms for planar graphs use basic facts of planar
embeddings. Hartvigsen and Mardon~\cite{HM94} prove that if $G$ is
planar, then there is a minimum cycle basis whose cycles are simple
and nested in the drawing in the embedding.  ({\em Nesting} is defined
formally in Section~\ref{sec:iso}.) As such, one can represent a
minimum cycle basis of a planar embedded graph as an edge-weighted
tree, called the {\it MCB tree}, such that:
 \begin{itemize}
 \item the nodes of the tree correspond one-to-one with the faces of
   the planar embedded graph, and
 \item each edge in the tree corresponds to a cycle in the basis,
   namely the cycle that separates the faces in the components
   resulting from removing this edge from the tree.
 \end{itemize}
Hartvigsen and Mardon also gave an $O(n^2 \log n)$-time algorithm for
the problem that was later improved to $O(n^2)$ by Amaldi et~al.~\cite{AIJKR09}.

\subsubsection*{Equivalence between MCB and GH trees}

In planar graphs, the MCB and GH problems are related via planar
duality.  Corresponding to every connected planar embedded graph $G$
(the {\em primal}) there is another connected planar embedded graph
(the {\em dual}) denoted $G^*$. The faces of $G$ are the vertices of
$G^*$ and vice versa. There is a one-to-one correspondence between the
edges of $G$ and the edges of $G^*$: for each edge $e$ in $G$, there
is an edge $e^*$ in $G^*$ whose endpoints correspond to the faces of $G$
incident to $e$.  Dual edges inherit the weight of the corresponding primal edge; namely, $w(e^*) = w(e)$.

We define {\it a simple cut} to be a cut such that both sides of the
cut are connected. This definition allows us to show that cycles and
cuts are equivalent through duality:
\begin{quote}
  In a connected planar graph, a set of edges forms a cycle in the
  primal iff it forms a simple cut in the dual.~\cite{Whitney1933}
\end{quote}
Just as cuts and cycles are intimately related via planar duality, so
are the all-pairs minimum cut and minimum cycle basis problems.  In fact,
Hartvigsen and Mardon showed that they are equivalent in the following sense:
\begin{theorem}[Corollary 2.2 from \citen{HM94}]\label{thm:cycle-basis-dual-GH}
  For a planar embedded graph $G$, a tree $T$ represents a minimum
  cycle basis of $G$ if and only if $T$ is a Gomory-Hu tree for $G^*$
  (after mapping edges to their duals and the node-face relationship
  to a node-vertex relationship via planar duality).
\end{theorem}
\noindent It follows that the $O(n^2)$ algorithm due to Amaldi
et~al.~\cite{AIJKR09} to find an MCB tree is also an algorithm to find
a GH tree.

\paragraph{Techniques} Herein, we focus on the frame of reference of the
minimum cycle basis.  In the remainder of this section, we highlight
the tools and techniques that we use in our algorithm and analysis.
Our algorithm, at a very high level, works by (iterative) finding a
minimum-weight cycle that separates two as-yet unseparated faces $f$
and $g$.  The order in which we separate pairs of faces is guided by a
bottom-up traversal of a hierarchical planar separator
(Section~\ref{sec:sepintro}).  We find each separating cycle by a
modification of Reif's algorithm (Section~\ref{sec:reifintro}) which
does so by way of shortest-path computations.  In order to achieve a
sub-quadratic running time, we will precompute many of the distances
for these shortest-path computations using a method developed by
Fakcharonphol and Rao (Section~\ref{sec:DenseDistGraph}).  To
guarantee that the cycles we find, using these precomputed distances,
will be nesting, we rely on shortest paths being unique
(Section~\ref{sec:iso}).  We represent a partially built solution with
something we call a {\em region tree} (Section~\ref{sec:region-tree}).
After we have defined these structures and tools, we will give a more
detailed overview of our algorithm and analysis
(Section~\ref{sec:overview}) before diving into the technical details
in the remainder of the paper.

\subsection{Planar graphs and simplifying assumptions} \label{sec:defs}

An embedded planar graph is a mapping of the vertices to distinct
points and edges to non-crossing curves in the plane.  A face of the
embedded planar graph is a maximal open connected set of points that
are not in the image of any embedded edge or vertex. Exactly one face,
the {\em infinite face} is unbounded. We identify a face with the
embedded vertices and edges on its boundary.

For a simple cycle $C$ in a planar embedded graph $G$, let $\into{C}$
denote the open bounded subset of the plane defined by $C$.  Likewise
define $\exto{C}$ for corresponding unbounded subset. We refer to the
closure of these sets as $\intc{C}$ and $\extc{C}$, respectively.  We
say that a pair of faces of $G$ are \emph{separated} by $C$ in $G$ if
one face is contained in $\intc{C}$ and the other face is contained in
$\extc{C}$.  A set of simple cycles of $G$ is called \emph{nested} if
mapping of this set $\into{}$ is also nesting.  A simple cycle $C$ is
said to \emph{cross} another simple cycle $C'$ if the cycles are not
nested.

\subsubsection*{Unique shortest paths} Our algorithm relies on the fact
that each new cycle we add to the basis nests with previously added
cycles.  To guarantee this (Section~\ref{sec:iso}), we will assume
that in all the graphs we consider ($G$ and graphs obtained from $G$),
there is a unique shortest path between any pair of vertices.  By
adding a small, random perturbation to the weight of each edge, one
can make the probability of having non-unique shortest paths
arbitrarily small. For example, if we take a random perturbation from
the set $[1,\ldots, n^4] \frac{1}{n^7}$, then by the Isolation Lemma
due to Mulmuley et al.~\cite{MVV87} the probability that all shortest
paths are unique is at least $1 - \frac{1}{n^3} \times {n \choose 2}
\ge 1 - \frac{1}{n}$.  

In Section~\ref{sec:unique-short-paths}, we give a more robust,
deterministic way to ensure the uniqueness of shortest paths.  We must
impose the structural simplifications presented below before
considering applying this more robust method.  The idea, based on the
technique used by Hartvigsen and Mardon~\cite{HM94}, is to break ties
consistently, thus imposing uniqueness on the shortest paths so the
above two lemmas hold.  Unfortunately, in order to do so, we require a
$\log^2 n$-increase in the running time of our algorithm.

\subsubsection*{3-regular with small, simple faces} The separators that we
use will require that the boundaries of the faces be small and simple
(each vertex appears only once on the boundary of the face).
Three-regularity (each vertex having degree 3) greatly simplifies the
analysis of our algorithm.  We can modify the input graph to satisfy
both these properties simultaneously by triangulating the primal with
infinite-weight edges and then triangulating the dual with zero-weight
edges using a {\em zig-zag triangulation}.  In a zig-zag
triangulation~\cite{KB92}, each face is triangulated using a simple
path, adding at most two edges per face adjacent to each vertex.

\paragraph{Triangulating the primal} For each face $f$ of the original
graph, identify a face $f'$ of the triangulated graph that is enclosed
by $f$ in the inherited embedding.  The minimum $fg$-separating cycle
in the original graph maps to an $f'g'$-separating cycle of the same
weight and so will not use any infinite-weight edge: the set of cycles
in a minimum cycle basis in the original graph are mapped to the set
of finite-weight cycles in a minimum-cycle basis of the finite-weight
cycles of the triangulated graph.

\paragraph{Triangulating the dual} Before the zig-zag triangulation,
each vertex in the dual has degree 3.  The zig-zag triangulation adds
at most 6 edges (2 for every adjacent face) to each vertex.
Therefore, the faces in the primal have size at most 9 and are still
simple.  Clearly, the primal is 3-regular after triangulating the
dual.  As in the triangulation of the primal, we can map between
minimum cycle bases and min cuts in the original graph and the
degree-three graph. Each vertex $v$ in the original graph is mapped to
a path $P_v$ of zero-weight edges in the degree-three graph.

\subsection{Reif's algorithm for minimum separating cycles} \label{sec:reifintro}

Reif gave an algorithm for finding minimum cuts by way of finding
minimum separating cycles in the dual graph~\cite{Reif83}.  

Let $X$ be the shortest path between any vertex on the boundary of $f$
and any vertex on the boundary of $g$.  Since $X$ is a shortest path,
there is a minimum $fg$-separating cycle, $C$, that crosses $X$ only
once.  Paths $P$ and $Q$ cross if there is a quadruple of faces
adjacent to $P$ and $Q$ that cover the set product $\{\mbox{left of
  $P$},\mbox{right of $P$}\}\times\{\mbox{left of $Q$},\mbox{right of
  $Q$}\}$.  Let $G_X$ be the graph obtained from $G$ by {\em cutting
  along} path $X$: duplicate every edge of $X$ and every internal
vertex of $X$ and create a new, simple face whose boundary is composed
of edges of $X$ and their duplicates.

The following result is originally due to Itai and
Shiloach~\cite{IS79} but we state it as it was given by
Reif~\cite{Reif83}.
\begin{theorem}[Proposition~3 from~\cite{Reif83}]\label{thm:cross-cycle}
  Let $X$ be the shortest $f$-to-$g$ path.  For each vertex $x \in X$,
  let $C_x$ be the minimum weight cycle that crosses $X$ exactly once
  and does so at $x$.  Then $\min_{x\in X} C_x$ is a minimum
  $fg$-separating.  Further, $C_x$ is the shortest path between
  duplicates of $x$ in $G_X$.
\end{theorem}
This theorem is algorithmic: the shortest paths between duplicates of
vertices on $X$ in $G_X$ can be found in $O(n \log n)$ time using
Klein's multiple-source shortest path algorithm~\cite{Klein05} or by
using the linear-time shortest-path algorithm for planar
graphs~\cite{HKRS97} and divide and conquer.  In our algorithm we will
emulate the latter method: start with the midpoint, $x$, of $X$ in
terms of the number of vertices, and recurse on the subgraphs obtained
by cutting along $C_x$.

\subsection{Isometric cycles} \label{sec:iso}


A cycle $C$ in a graph is said to be \emph{isometric} if for
any two vertices $u,v\in C$, there is a shortest path in the graph
between $u$ and $v$ which is a subpath of $C$. A set of cycles is said
to be isometric if all cycles in the set are isometric. 
\begin{lemma}[Proposition 4.4 from \citen{HM94}]
Any minimum cycle basis of a graph is isometric.
\end{lemma}

The following lemma will allow us to find isometric cycles by
composing shortest paths. Further, these isometric cycles will be
nesting and so can be represented with a tree which is precisely the
MCB tree.
\begin{lemma}\label{Lem:Isometric}
  Let $G$ be a graph in which shortest paths are unique.  The
  intersection between an isometric cycle and a shortest path in $G$
  is a (possibly empty) shortest path. The intersection between two
  distinct isometric cycles $C$ and $C'$ in $G$ is a (possibly empty)
  shortest path; in particular, if $G$ is a planar embedded graph, $C$
  and $C'$ do not cross.
\end{lemma}
\begin{proof}
  Let $C$ be an isometric cycle and let $P$ be a shortest path
  intersecting $C$. Let $u$ and $v$ be the first and last vertices of
  $P$ that are in $C$. Since $C$ is isometric, there is a shortest
  path $P'$ in $C$ between $u$ and $v$. Since shortest paths are
  unique, $P'$ is the subpath of $P$ between $u$ and $v$.  Hence,
  $C\cap P = P'$, giving the first part of the lemma.

  Let $C'$ and $C$ be distinct isometric cycles.  For any two distinct
  vertices $u,v \in C' \cap C$, let $P$ be the shortest $u$-to-$v$
  path.  Since $C$ and $C'$ are isometric and shortest paths are
  unique, $P \subset C' \cap C$.
\end{proof}

\subsection{Planar separators} \label{sec:sepintro}

A \emph{decomposition} of a graph $G$ is a set of subgraphs
$P_1,\ldots,P_k$ such that the union of vertex sets of these subgraphs
is the vertex set of $G$ and such that every edge of $G$ is contained
in a unique subgraph. We call $P_1,\ldots,P_k$ the \emph{pieces} of
the decomposition. The \emph{boundary vertices} $\partial P_i$ of a
piece $P_i$ is the set of vertices $u$ in that piece such that there
exists an edge $(u,v)$ in $G$ with $v\notin P_i$. We recursively
decompose the graph to get a {\em recursive subdivision}.  A piece $P$
is decomposed into subpieces called the children of $P$; the boundary
vertices of a child are the boundary vertices inherited from $P$ as
well as the boundary vertices introduced by the decomposition of $P$.
We use Miller's Cycle Separator for this decomposition: the introduced
boundary vertices form simple cycles~\cite{Miller86}.  This
decomposition requires that the sizes of the faces be bounded by a constant.

Fixing an embedding of $G$, a piece inherits its embedding from $G$'s
embedding. A {\it hole} in a piece is a bounded face containing
boundary vertices.  While it is not possible to guarantee that holes
are not introduced by the balanced recursive subdivision, it is
possible to guarantee that each piece has a constant number of
holes~\cite{INSW11,KMS13}.  Further, we will ensure that pieces are
connected; we give the details of ensuring connectivity in
Section~\ref{sec:sep}.

In summary, we use the following recursive decomposition:
\begin{definition}[Balanced recursive subdivision] \label{def:sep} A
  decomposition of $G$ such that a piece $P$ is divided into a
  constant number of subpieces, $P_1, P_2, P_3, \ldots$ each of which
  is connected, has $O(1)$ holes, and contains at most $\frac 1 2 |P|$
  vertices, at most $\frac 1 2 |\partial P|$ boundary vertices
  inherited from $P$, and at most $\sqrt {|P|}$ additional boundary
  vertices.
\end{definition}

We define the $O(\log n)$ \emph{levels} of the recursive subdivision
in the natural way: level $0$ consists of one piece ($G$) and level
$i$-pieces are obtained by applying the Cycle Separator Theorem to
each level $(i-1)$-piece. We represent the recursive subdivision as a
binary tree, called the \emph{subdivision tree (of $G$)}, with level
$i$-pieces corresponding to vertices at level $i$ in the subdivision
tree. Parent/child and ancestor/descendant relationships between
pieces correspond to their relationships in the subdivision tree.  For
notation simplicity, we assume that the subdivision tree is binary;
generalizing this to a constant number of children is straightforward.

We prove the following in Section~\ref{sec:sep}:
\begin{theorem}\label{thm:sumPiece}
  Let $\mathcal P$ be the set of pieces in a recursive subdivision of
  $G$.  Then $\sum_{P\in\mathcal P}|P| = O(n\log n)$ and
  $\sum_{P\in\mathcal P}|\partial P|^2 = O(n\log n)$.
\end{theorem}

\subsection{Precomputing distances}\label{sec:DenseDistGraph}

For a piece $P$, the \emph{internal dense distance graph of $P$} or
$\intDDG(P)$ is the complete graph on the set of boundary vertices of
$P$, where the weight of each edge $(u,v)$ is equal to the shortest
path distance between $u$ and $v$ in $P$. The union of internal dense
distance graphs of all pieces in the recursive subdivision of $G$ is
the \emph{internal dense distance graph (of $G$)}, or simply
$\intDDG$.  Fakcharoenphol and Rao showed how to compute $\intDDG$ in
$O(n\log^3n)$ time~\cite{FR06}.  (Using the multiple-source shortest
path algorithm due to Klein, this can be improved to
$O(n\log^2n)$~\cite{Klein05}; however, this improvement is not
compatible with, deterministically imposing unique
shortest paths or generalizing to external distances, so we do not use it.)

Consider a piece $P$ with $h$ holes.  The \emph{external dense
  distance graph of $P$} or $\extDDG(P)$ is the union of $h+1$
complete graphs: one for each hole and one for the external face.  For
two vertices $u,v$ on the boundary of a common hole or on the boundary
of the external face, the weight of edge $(u,v)$ is the shortest path
distance between $u$ and $v$ in the component of $G\setminus E(P)$
that contains $u$ and $v$.  The \emph{external dense distance graph of
  $G$} or $\extDDG$ is the union of all external dense distance graphs
of the pieces in the recursive subdivision of $G$.  As a consequence
of Theorem~\ref{thm:sumPiece}, both $\intDDG$ and $\extDDG$ have size
$O(n\log n)$.

Fakcharoenphol and Rao give an implementation of Dijkstra's algorithm,
{\em FR-Dijkstra}, that finds a shortest path tree in a graph composed
of dense distance graphs in $O(b\log^2b)$ time, where $b$ is the total
number of boundary vertices, counted with multiplicity~\cite{FR06}.

\begin{theorem}\label{thm:ExtDenseDistGraph}
    The external dense distance graph of 
  $G$ can be computed in $O(n\log^3n)$ time.
\end{theorem}
\begin{proof}
  Fakcharoenphol and Rao compute $\intDDG$ via a leaves-to-root
  traversal of the recursive subdivision of $G$ by applying
  FR-Dijkstra to obtain $\intDDG(P)$ for a piece $P$ from the internal
  dense distance graphs of its two children.  Consider the external
  distances for the external face of $P$: these can be computed by
  FR-Dijkstra from the external dense distance graph of the parent of
  $P$ and the internal dense distance graph of the sibling(s) of
  $P$.  

  Consider a hole $H$ of a child $P'$ of $P$ and let us compute
  $\extDDG(P')$ restricted $H$. Observe that (the subgraph of $G$
  restricted to) $H$ is the union of certain sibling pieces of $P'$
  and certain holes of $P$. Given internal dense distance graphs of
  all children of $P$ and external dense distance graphs of all holes
  of $P$, it follows that $\extDDG(P')$ restricted to $H$ can be
  obtained efficiently with FR-Dijkstra. 

  Therefore, we can obtain $\extDDG$ with a root-to-leaves traversal
  of the recursive subdivision given $\intDDG$. The running time is
  the same as that for finding $\intDDG$.
\end{proof}

\subsection{The region tree} \label{sec:region-tree}

We build the minimum cycle basis iteratively, adding cycles to a
partially constructed basis that separated two as-yet unseparated
faces.  We represent the partially constructed basis with a tree that
we call the {\em region tree}.  Each node of a region tree either
represents a {\em region} (defined below) or a face of the graph.  Adjacency
represents enclosure.  We take the region tree to be rooted at a root
$r$ which always corresponds to the special region that represents the
entire plane.  Initially the tree is a star centered at a root $r$
with each leaf corresponding to a face in the graph (including the
infinite face).

\begin{figure}[ht]
  \centering
  \includegraphics[scale=0.7]{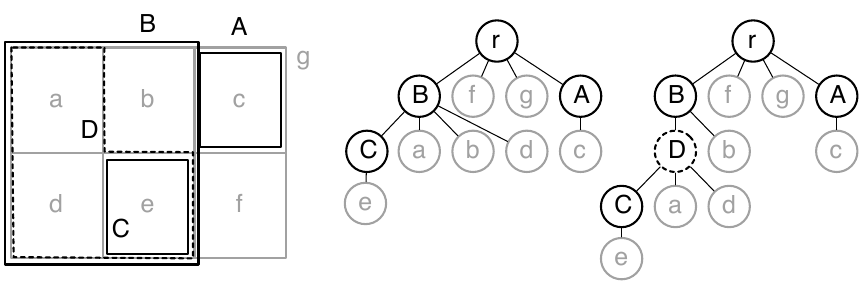}
  \caption{A graph with faces $a$ through $g$; four nesting cycles
    $A$ through $D$ (left).  A region tree for cycles
    $A$,$B$ and $C$ (center).  A region tree for cycles $A$
    through $D$ (right).  }
  \label{fig:region-tree-update}
\end{figure}

We update the tree to reflect the cycles that we add to the basis.
This process is illustrated in Figure~\ref{fig:region-tree-update}.
When a first cycle $C$ is found, we create a new node $x_C$ for the
tree, make $x_C$ a child of the root, and make all the faces that $C$
encloses children of $x_C$.  Node $x_C$ defines a \emph{region} $R$:
the subgraph of $G$ contained in the closed subset of the plane
defined by the interior of $C$ and not enclosed in the interior of any
children of $C$. We say that $R$ is {\em bounded} by $C$, that $C$ is
a \emph{bounding cycle} of $R$, and that $R$ {\em contains} the child
regions and/or child faces defined by the tree.  If two faces $f,g$
are unseparated, they are children of a common region node $r$; to add
a cycle $C$ that separates $f$ and $g$, $x_C$ will become a child of
$f$ and $g$'s common parent.  As we will only add cycles which nest
with those we have already found, the updates to the region tree are
well defined.  The most technically challenging part of the algorithm
is in how to update the region tree (Section~\ref{sec:UpdateRegionTree}).

In the final tree, after all faces have been separated, each face is
the only face-child of a region.  We call such a region tree a {\em
  complete region tree}.  Mapping each face to its unique parent
creates a tree with one node for each face in the graph; this is the
MCB-tree.

\paragraph{Region tree data structure} We represent the region tree
using the top tree data structure~\cite{AHLT05}. This will allow us to
find the lowest common ancestor $\lca{x}{y}$ of two vertices $x$ and
$y$, and determine whether one vertex is a descendant of another in
logarithmic time.  Top trees also support the operation
$\jump(x,y,d)$, which for two vertices $x$ and $y$, finds the vertex
that is $d$ edges along the path from $x$ to $y$ in logarithmic time.
Top trees can also find the weight of the simple path between two
given endpoints in logarithmic time.

\subsection{Region subpieces}

The \emph{region subpieces} of a piece $P$ are the subgraphs defined
by the non-empty intersections between $P$ and regions defined by the
region tree. We say that a region $R$ and a region subpiece $P_R$ are
associated with each other.  The boundary vertices $\partial P_R$ of
$P_R$ are inherited from $P$: $\partial P_R = P_R \cap \partial P$.
These constructions are illustrated in Figure~\ref{fig:region-subpiece}.

\begin{figure}[ht]
  \centering
  \includegraphics[scale=0.5]{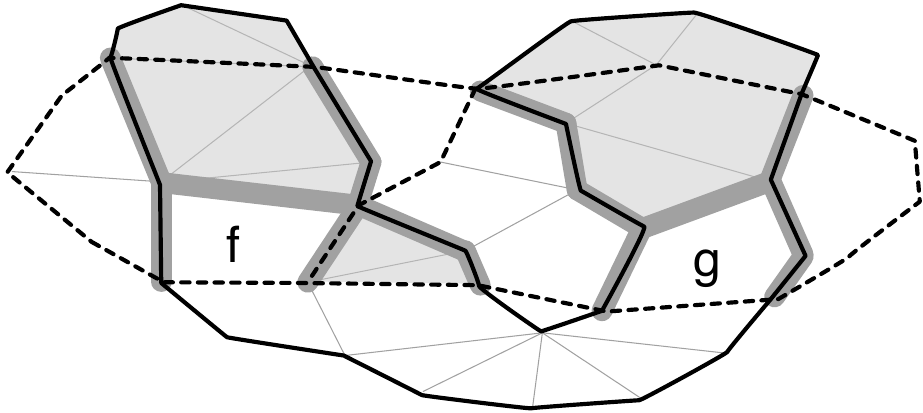}
  \caption{The dotted edges are the edges belonging to the boundaries
    of a piece $P$ and $P$'s children $P_1$ and $P_2$.  $f$ is a face
    of $P_1$ and $g$ is a face of $P_2$.  The solid black edges are
    the bounding cycle of a region $R$.  The three shaded regions are
    three child regions of $R$.  The thick grey edges are the edges of
    region subpiece $P_R$. (The remaining edges of the graph are the
    thin, grey edges.)  The intersection of a minimum separating cycle
    for $f$ and $g$ with $P$ uses only edges of $P_R$.}
  \label{fig:region-subpiece}
\end{figure}

The order in which we separate faces is guided in a bottom-up fashion
by the recursive subdivision of $G$.  Starting with a piece $P$ at the
deepest level of the recursive subdivision, we separate all pairs of
faces of $G$ that have an edge in common with $P$, allowing us to
maintain the following invariant:

\begin{invariant}\label{inv:one-pair}
  For each region subpiece $P_R$ of $P$, at most one pair of faces
  needs to be separated.
\end{invariant}

\begin{proof}
  Suppose we have added (nesting) separating cycles to the region tree
  that separate all pairs of faces sharing an edge with a descendant
  piece of $P$. Suppose two faces $f_1$ and $f_2$ of $G$ both sharing
  edges with $P$ have not yet been separated. Let $P_1$ and $P_2$ be
  the two child pieces of $P$ in the recursive subdivision. Since all
  pairs of faces sharing edges with $P_1$ and all pairs of elementary
  faces sharing edges with $P_2$ have already been separated,
  w.l.o.g.~$f_i$ only shares edges with $P_i$.

  Since $f_1$ and $f_2$ have not been separated, they must belong to a
  common region $R$ in the region tree. For a contradiction, if there
  is a third face $f$ that is unseparated from $f_1$ or $f_2$, $f$
  w.l.o.g.~shares edges with $P_1$. However, all pairs of faces
  sharing edges with $P_1$ have already been separated.
\end{proof}

\subsection{Overview of the algorithm and analysis}\label{sec:overview}

Our algorithm for computing an MCB tree is:
\begin{quote}
Find a balanced recursive decomposition of the input graph.

Compute the internal and external dense distance graphs. 

Initialize the region tree that represents an empty set of cycles.

Considering each piece $P$ of the recursive decomposition, according
to a bottom-up order, determine the region subpieces of $P$.

For each region subpiece $P_R$ that contains a pair of unseparated
faces:
\begin{itemize}
\item Find the minimum $fg$-separating cycle $C$.
\item Add $C$ to the collection of cycles and update the region tree accordingly.
\end{itemize}
\end{quote}

We present the remaining details for the balanced, recursive
decomposition (Definition~\ref{def:sep}), namely, ensuring that the
pieces remain connected and the proof of Theorem~\ref{thm:sumPiece},
in Section~\ref{sec:sep}.  We have already discussed computing the
dense distance graphs (Section~\ref{sec:DenseDistGraph}) and
initializing the region tree (Section~\ref{sec:region-tree}).

\bigskip

In Section~\ref{sec:RegionSubpieces}, we show, given a piece
$P$ and the current set of regions represented by the region tree,
what the region subpieces for $P$ are and which edges of the graph are
in each of the region subpieces.  Formally, we will prove:
\begin{theorem}\label{thm:RegionSubpiece}
  The region subpieces of a piece $P$ can be identified in $O((|P| +
  |B|^2)\log^2n)$ time, where $B$ is the set of boundary vertices of
  the children of the piece.
\end{theorem}
\noindent Summing over all region
subpieces, appealing to Theorem~\ref{thm:sumPiece}, the time spent by
the algorithm identifying the region subpieces is $O(n \log^3 n)$.

\bigskip

We have already seen that by considering the pieces in bottom-up
order, each region subpiece has at most one pair $(f,g)$ of
unseparated faces (Invariant~\ref{inv:one-pair}).  In
Section~\ref{sec:SepFacePair}, we show how to find a minimum
$fg$-separating cycle $C$ by emulating Reif's algorithm.  To do so, we
must implicitly cut open the graph along the shortest $f$-to-$g$ path
and modify the dense distances to reflect the change in the graph's
shortest path metric.  We will prove:
\begin{theorem}\label{thm:separate}
  The minimum $fg$-separating cycle for a region subpiece $P_R$ can be found in time $O((|\partial P_R|^2+|P_R|)\log^3|P_R|)$.  
\end{theorem}
\noindent Therefore, the time spent by the algorithm finding minimum separating cycles is $O(n
\log^4 n)$.

\bigskip

We show how to update the region tree to reflect our addition of cycle
$C$ to the basis.  For each region $R$ in the region tree we store a
{\em compact representation} $G[R]$ where vertices with degree 2 are
removed by merging the adjacent edges creating {\em super edges}. For
each super edge we store the first and the last edge on the
corresponding path. We show how to maintain compact representations of
regions in Section~\ref{sec:UpdateRegionTree}.  Formally, we will prove:
\begin{theorem}\label{thm:updateregiontree}
  We can update the region tree to reflect that region $R$ is split
  into regions $R_1$ and $R_2$ by the addition of $C$ in time
  $O(\min\{|F_1|,|F_2|\}\log^3 n+ (|P_R|+|\partial P_R|^2)\log n)$
  where $F_i$ are the children of $R_i$ in the region tree after the
  update.
\end{theorem}
\noindent Using the lemma below, the total time spent by the algorithm
in updating the region tree is $O(n \log^4 n)$.  Combining with the
above running times, this gives an overall running time of $O(n \log^4
n)$.

\begin{lemma}\label{Lem:Merge}
  Consider a set of objects $S$, a weight function $w: S \rightarrow
  \mathds{Z}^+$ and a merging operation that replaces distinct objects
  $o$ and $o'$ by a new object whose weight is at most $w(o) + w(o')$
  in time at most $t\cdot\min\{w(o),w(o')\}$ for some $t$. Then
  repeatedly merging objects until one object remains takes $O(t \,
  w(S)\log w(S))$ time.
\end{lemma}
\begin{proof}
  We may suppose w.l.o.g.\ that initially all objects have weight $1$.
  The run-time for a sequence of merges that results in an object of
  weight $w$ satisfies the recurrence
  \[
    T(w)\leq c\, \max_{1\leq w'\leq\lfloor w/2\rfloor}\{T(w') + T(w - w') + tw'\}
  \]
  for some constant c.  It is easy to see that the right-hand side is
  maximized when $w' = \lfloor w/2\rfloor$, giving $T(w(S)) =
  O(t\,w(S)\log w(S))$, as desired.
\end{proof}

\subsubsection*{Results} Recall that these running times are stated with
the uniqueness-of-shortest-paths assumption (guaranteed by suitable
randomization) and that we will show how to achieve this uniqueness
deterministically while incurring an additional $O(\log^2 n)$ factor
in the running times.  Our algorithm computes the complete region tree
in $O(n \log^4 n)$ time.  As argued in Section~\ref{sec:region-tree},
this can be used to obtain the MCB tree; by planar duality, the same algorithm can be used to find the GH tree.  Therefore, we get: 

\begin{theorem}
  The minimum cycle basis or Gomory-Hu tree of an undirected and
  unweighted planar graph can be computed in $O(n\log^4n)$ time and
  $O(n\log n)$ space.
\end{theorem}

In order to find a minimum $st$-cut using the GH tree, we need to find
the minimum weight edge on the $s$-to-$t$ path in the tree.  With an
additional $O(n \log n)$ preprocessing time, one can answer such
queries in $O(1)$ time using a tree-product data
structure~\cite{KS08}, giving:

\begin{theorem}\label{Thm:MinCutOracle}
  With $O(n\log^4n)$ time and $O(n\log n)$ space for preprocessing,
  the weight of a min $st$-cut between for any two given vertices $s$
  and $t$ of an $n$-vertex planar, undirected graph with non-negative
  edge weights can be reported in constant time.
\end{theorem}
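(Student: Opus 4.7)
The plan is to reduce every min $st$-cut query to a bottleneck (minimum edge weight) query on a Gomory-Hu tree of $G$, and to obtain that tree via the machinery developed earlier in the paper.

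First I would invoke Lemma~\ref{Lem:cycle-basis-dual-GH}: a minimum cycle basis of $G^\ast$ is equivalent to a Gomory-Hu tree of $(G^\ast)^\ast=G$. Since $G^\ast$ is again a planar, undirected, non-negatively weighted graph on $O(n)$ vertices and $O(n)$ edges (the bounded-degree assumption on $G$ in Section~\ref{sec:DenseDistGraph} translates into a bounded-face-size assumption on $G^\ast$, which we may enforce by the usual triangulation-with-zero-weight trick), I would run the MCB construction of Sections~\ref{sec:DenseDistGraph}--\ref{sec:UpdateRegionTree} on $G^\ast$. This produces a Gomory-Hu tree $T$ of $G$ within the advertised preprocessing bound and $O(n)$ space. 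Here we do not need an explicit listing of the basis cycles (only the tree), so the additional $C$ term appearing in the MCB bound does not enter.

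Next I would use the defining property of a Gomory-Hu tree: for every $s,t\in V$, the value of a minimum $st$-cut in $G$ equals the minimum weight of an edge on the unique $s$-to-$t$ path in $T$. Thus it remains to equip $T$ with a structure that returns this path-minimum in $O(1)$ time. For this I would use a standard off-the-shelf reduction: root $T$ arbitrarily, label every non-root vertex by the weight of its parent edge, and build (i) a Bender-Farach-Colton LCA structure on $T$ in $O(n)$ time and space, and (ii) a linear-size $\pm 1$-RMQ over the Euler tour so that, given $\mathrm{lca}(s,t)$, the minimum parent-edge weight along each of the two legs $s\to\mathrm{lca}(s,t)$ and $t\to\mathrm{lca}(s,t)$ can be retrieved in $O(1)$. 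The answer to the query is then the smaller of the two leg-minima. Total extra preprocessing: $O(n)$ time and space; query time: $O(1)$.

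The hard part is entirely the first step---producing the Gomory-Hu tree of $G$ (equivalently, the MCB of $G^\ast$) in near-linear time---and that is precisely what the preceding sections are devoted to. The oracle layer on top of $T$ is a standard, self-contained application of well-known constant-time tree path-minimum data structures and contributes only $O(n)$ to both preprocessing time and space, so the overall bounds claimed in the theorem follow.
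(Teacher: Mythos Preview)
Your proposal is correct and follows essentially the same route as the paper: run the MCB algorithm of the preceding sections on the dual $G^\ast$, then answer each query by a constant-time path-minimum lookup in the resulting tree. The only cosmetic differences are that the paper queries the vertex-weighted region tree of the basis directly (inserting a dummy high-weight child at every internal node so that the LCA cycle is automatically excluded, and citing~\cite{PathMaxQuery} for the path-min structure) rather than first passing to the edge-weighted Gomory-Hu tree, and that it handles $G^\ast$ being a multigraph by splitting loops and parallel edges rather than by your triangulation remark.
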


In Section~\ref{sec:MinCutReport}, we will show how to explicitly find
the cycles given the complete region tree, giving the following results:

\begin{theorem} \label{thm:mincuts} Without an increase in
  preprocessing time or space, the min $st$-cut oracle of
  Theorem~\ref{Thm:MinCutOracle} can be extended to report cuts in
  time proportional to their size.
\end{theorem}

\begin{theorem}\label{Thm:CycleBasisExplicit}
  The minimum cycle basis of an undirected planar graph with non-negative
  edge weights can be computed in $O(n\log^4n + C)$ time and $O(n\log n + C)$ space,
  where $C$ is the total size of the cycles in the basis.
\end{theorem}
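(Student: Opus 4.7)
The plan is to invoke the main algorithm (Theorem~??) to construct, in $O(n\log^2n)$ time and $O(n)$ space, the region tree $\mathcal T(\mathcal B)$ of the GMCB $\mathcal B$ together with an implicit representation of each cycle $C\in\mathcal B$. It then remains to turn each implicit representation into an explicit list of edges in time and space proportional to $|C|$, so that the aggregate overhead is $O(C)$.

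First I would revisit the construction in Section~\ref{sec:SepFacePair}. Each cycle of $\mathcal B$ arises as the minimum separating cycle produced by our specialization of Reif's algorithm inside some region subpiece $P_R$; it is therefore naturally encoded as a short certificate, namely a designated vertex $x$ on the cycle together with $O(\sqrt{|P_R|}\,\polylog n)$ pointers into entries of $\intDDG$ and $\extDDG$ that, when concatenated with the shortest path $X$ used to cut $P_R$ open, yield the cycle. This representation has size $O(\sqrt n\,\polylog n)$ per cycle, so storing all such certificates fits within the $O(n)$ space already used to build $\mathcal T(\mathcal B)$.

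Second, I would unpack these certificates. Every $\intDDG$ or $\extDDG$ edge $(u,v)$ of a piece $P$ represents a shortest $uv$-path in $P$ (or in $G\setminus E_P$), which can be output in time linear in its edge-count from a single-source shortest-path representation rooted at $u$. Using Klein's multiple-source shortest paths algorithm~\cite{MultiSrc}, we can, for each piece $P$ encountered during the algorithm, prepare such representations in $O(|P|\log|P|)$ time, a cost already absorbed in the $O(n\log^2n)$ preprocessing. Consequently, the only additional work is the path-unpacking itself, which by charging each output edge to a distinct edge of some cycle of $\mathcal B$ telescopes to $O(C)$ in total.

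To achieve the $O(C)$ space bound I would output the cycles one at a time rather than materialising them all simultaneously: for each cycle we reconstruct, report, and immediately discard its edge list, keeping only the (necessary) region tree and the shortest-path tree data for the current piece in memory. Since the region tree and certificates occupy $O(n)=O(C)$ space (each cycle has length at least three and in a connected planar graph $|\mathcal B|=\Theta(n)$, so $C=\Omega(n)$), the claimed bound follows. The main obstacle is the scheduling in step two: we must visit the pieces in an order that lets us load, reuse across all cycles touching the piece, and then free the relevant shortest-path structures, without paying the $O(|P|\log|P|)$ preprocessing more than once per piece nor carrying all such structures in memory simultaneously. Handling this bookkeeping so that the aggregate time remains $O(n\log^2n+C)$ and the working space remains $O(C)$ is the delicate part.
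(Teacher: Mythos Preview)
The paper leaves this proof environment empty, so there is nothing to compare against directly; I can only assess your plan on its own merits.

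Your overall strategy---run the main algorithm to obtain $\mathcal T(\mathcal B)$ together with implicit cycle certificates, then unpack $\intDDG$/$\extDDG$ edges into explicit shortest paths---is the natural one and almost certainly what the authors intended. Two points, however, are genuine gaps rather than routine details.

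First, your treatment of $\extDDG$ unpacking is too quick. You write that an $\extDDG$ edge $(u,v)$ of a piece $P$ can be expanded from ``a single-source shortest-path representation rooted at $u$'' prepared in $O(|P|\log|P|)$ time, absorbed into preprocessing. But that edge encodes a shortest $uv$-path in $G\setminus E_P$, a graph of size $n-|P|$, so no such representation fits in $O(|P|)$ time or space. The workable fix is to unpack recursively: an $\extDDG$ edge of $P$ decomposes into $\intDDG$ edges of the sibling of $P$ together with $\extDDG$ edges of the parent of $P$, and so on up the subdivision tree until only $\intDDG$ edges of leaf pieces (i.e., graph edges) remain. You then have to bound the total number of intermediate DDG edges generated before reaching actual edges of $G$ and show it is $O(C)$ (or at worst $O(C+n\,\polylog n)$), and you have to schedule the MSSP structures so that only one piece's data is resident at a time. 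This is exactly the ``delicate part'' you flag but leave unresolved; without it, neither the time nor the space bound is established.

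Second, your justification of the $O(C)$ space bound asserts that $|\mathcal B|=\Theta(n)$ and hence $C=\Omega(n)$. That is false in general: the cycle-space dimension of a connected graph is $m-n+1$, which can be $O(1)$ (take $G$ to be a single cycle). The paper's own abstract states the space as $O(n+C)$, which is surely the intended bound; you should not try to argue $n=O(C)$.
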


\section{Separating a pair of faces}\label{sec:SepFacePair}

In this section we prove Theorem~\ref{thm:RegionSubpiece}, we show how
to find the minimum $fg$-separating cycle for the unique pair of
unseparated faces $f,g$ in a region subpiece $P_R$.  We emulate the
minimum-separating cycle algorithm due to Reif~\cite{Reif83}
(Section~\ref{sec:reifintro}).  Recall that Reif's algorithm finds the
minimum $fg$-separating cycle by first cutting the graph $G$ open
along the shortest $f$-to-$g$ path $X$, creating $G_X$, and then
computing shortest paths $C_x$ between a vertex $x \in X$ and its copy
in $G_X$.  $C_x$ is a cycle that crosses $X$ exactly once; the min
$fg$-separating cycle is the minimum over all such cycles.
 
We cannot afford to work in $G$ or $G_X$ and wish to find
shortest paths by using the precomputed distances in the dense
distance graphs using the adaptation of Dijkstra's algorithm to these
dense distance graphs developed by Fakcharonphol and Rao, FR-Dijkstra.
In order to make use of FR-Dijkstra and the precomputed distances, we
must deal with the following peculiarities:
\begin{itemize}
\item $\extDDG$ corresponds to distances in $G$, not $G_X$.  We compute
  modified dense distance graphs to account for this in
  Section~\ref{sec:modify-dense-dist}.
\item For a vertex $x \in X \cap P_R$, $C_x$ may not be contained by
  $P_R$.  We call such cycles \emph{internal cycles}.  We show how to
  find these cycles in Section~\ref{sec:find-intern-cycl}.
\item $X$ is not contained entirely in $P_R$.  For a vertex $x \in
  X\setminus P_R$, we will compute $C_x$ by composing distances in
  $\extDDG$ and $\intDDG$ between restricted pairs of boundary
  vertices of $P_R$.  We call such cycles \emph{external cycles}.  We
  show how to find these cycles in Section~\ref{sec:find-extern-cycl}.
\end{itemize}
Internal and external cycles are illustrated in
Figure~\ref{fig:min-separating-cycle}.  The minimum length cycle over
all internal and external cycles is the minimum $fg$-separating cycle
in $G$.

\begin{figure}[ht]
  \centering
  \includegraphics[scale=0.4]{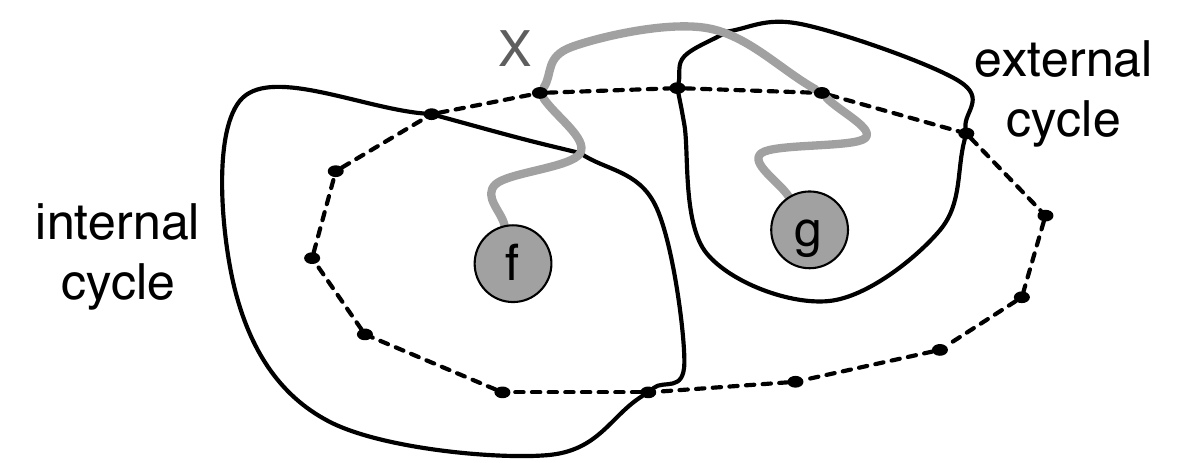}
  \caption{An external and internal cycle separating faces $f$ and $g$
    in a region subpiece, whose boundary is dashed.}
  \label{fig:min-separating-cycle}
\end{figure}

\subsection{Modifying the external dense distance graph}
\label{sec:modify-dense-dist}

We use $\extDDG$ to compute $\extDDG_X$, the dense distance graph that
corresponds to distances between boundary vertices of $P_R$ when the
graph is cut open along $X$. However, we do not compute $\extDDG_X$
explicitly; rather, we determine its values as needed.  Recall the
$\extDDG$ is really a set of dense distances graphs, one for each hole
of $P$ and one for the outer boundary of $P$.  We do the following for
each dense distance graph in the set.

Let $B$ be the set of boundary vertices of $P_R$ corresponding to a
hole or outer boundary of $P$.  Cutting $G$ open along $X$ duplicates
vertices of $B$ that are in $X$, creating $B'$.  $\extDDG_X$ can be
represented as a table of distances between every pair of vertices of
$B'$:
\[
\extDDG_X(x,y) =
\left\{
  \begin{array}{ll}
    \infty & \mbox{if $x$ is a copy of $y$} \\
    \infty & \mbox{if $x$ and $y$ are separated in $G_X$ outside
      $P_R$} \\
    \extDDG(x,y) & \mbox{otherwise}
\end{array}
\right.
\]

We describe how to determine if $x$ and $y$ are separated in $G_X$
outside $P_R$.  The portions of $X$ that appear outside $P_R$ form a
parenthesis of (a subset of) the boundary vertices, illustrated in
Figure~\ref{fig:parenthesis}.  By walking along $X$ we can label the
start and endpoints of these parentheses.  By walking along the
boundary of the subpiece we can label a group of boundary vertices
that are not separated by $X$ by pushing the vertices onto a stack
with a label corresponding to the start of a parenthesis and popping
them off when the end of the parenthesis is reached, labelling the
boundary vertices with the corresponding parenthesis.  Two boundary
vertices are not separated if they have the same parenthesis label.
Hence, whenever we are asked for a distance in $\extDDG_X(x,y)$ we
return $\infty$ if $x$ and $y$ are not in the same parenthesis and
return $\extDDG(x,y)$ otherwise.  Computing the parentheses for all
the external dense distance graphs (corresponding to the holes and
outer boundary of $P$) takes $O(|\partial P_R|)$ time.

\begin{figure}[ht]
  \centering
  \includegraphics[scale=.3]{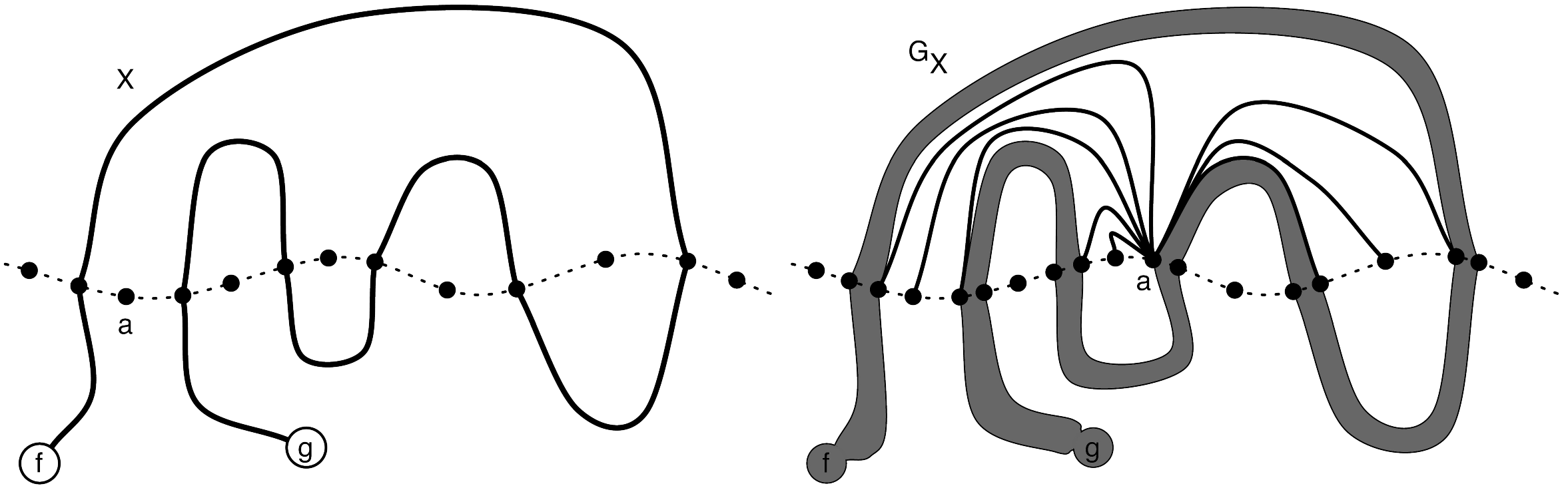}
  \caption{Modifying the external dense distance graph.  (Left) $X$ is
    given by the solid line and the boundary of the subpiece is given
    by the dotted line.  The parts of $X$ outside the subpiece form a
    parenthesis.  (Right) In $G_X$, the only finite distances from $a$
    in $\extDDG_X$ correspond to the thick lines.  The shaded area
    represents the new face created by cutting along $X$.}
  \label{fig:parenthesis}
\end{figure}

\subsection{Finding internal cycles}
\label{sec:find-intern-cycl}

Consider $D = X \cap P_R$ according to the order of the vertices
along $X$.  For each vertex $x \in D$, we compute the shortest
$x$-to-$x'$ paths in $G_X$ where $x'$ is the copy of $x$ in $G_X$.  We
do this using (standard) Dijsktra's algorithm on the cut-open graph
induced by the vertices in $P_R$ (i.e.~$G_X[P_R]$) and the modified
dense distance graph: $\extDDG_X$.  Each cycle can then be found in
$O((|P_R| + |\partial P_R|^2)\log |P_R|)$ time.  Let $x_m$ be the
midpoint vertex of $D$ according to the order inherited from $X$.
$C_{x_m}$ splits $P_R$ and $\extDDG_X$ into two parts (not necessarily
balanced). Recursively finding the cycles through the midpoint in each
graph part results in $\log |D|$ levels for a total of $O((|P_R| +
|\partial P_R|^2)\log |P_R|\log|D|) = O((|P_R| + |\partial
P_R|^2)\log^2 |P_R|)$ time to find all the internal cycles.  In order
to properly bound the running time, one must avoid reproducing long
paths in the subproblems: in a subproblem resulting from divide and
conquer, we remove degree-two vertices by merging the adjacent edges
(in the same way as Reif does~\cite{Reif83}).

\subsection{Finding external cycles}
\label{sec:find-extern-cycl}

In the lemma below, we show that every external cycle is composed of a
single edge $ab$ in the {\em unmodified} $\extDDG$ and a shortest path
$\pi_{ab}$ between boundary vertices of $P_R$ in $G$ that does not
cross $X$.  Given $\extDDG_X$ and $\intDDG_X$, a shortest path tree in
$\extDDG_X\cup\intDDG_X$ rooted at a vertex of $\partial P_R$ can be
found in $O(|\partial P_R|\log^2|P_R|)$ time using the FR-Dijkstra
algorithm.  Therefore all shortest paths, $\pi_{ab}$, can be found in
$O(|\partial P_R|^2\log^2|P_R|)$ time.

We can find $\intDDG_X$ in $O((|\partial P_R|^2+|P_R|)\log^3 |P_R|)$
time by cutting open $X$ and using the recursive internal dense
distance graph algorithm of Fakcharoenphol and Rao.  We compute
$\intDDG_X$ from scratch because $X$ has been cut open and because
$P_R$ is no longer a subgraph of $G$ due to the compact representation
we present in Section~\ref{sec:SharedEdges}.

In order to compute all the external cycles, one enumerates over all
pairs $a,b$ of vertices, 
summing the weight of $\pi_{ab}$ and the weight of edge $(a,b)$ in
$\extDDG$.  Since there are $O(|\partial P_R|)$ boundary vertices,
there are $O(|\partial P_R|^2)$ pairs to consider. The minimum-weight
external cycle then corresponds to the pair with minimum weight. By
the above, this cycle can be found in $O((|\partial
P_R|^2+|P_R|)\log^2 |P_R|)$ time.

It remains to prove the required structure of the external cycles.

\begin{lemma}
\label{lemma-single-edge}
  The shortest external cycle is composed of a single edge $ab$ in the
  {\em unmodified} $\extDDG$ and a shortest path $\pi_{ab}$ between
  boundary vertices of $P_R$ in $G$ that does not cross $X$.
\end{lemma}

\begin{proof}
  Let $C$ be the shortest external cycle that separates faces $f$ and
  $g$ -- as illustrated in Figure~\ref{fig:external}. By Theorem~\ref{thm:cross-cycle}, $C$ is a cycle that crosses
  $X$ exactly once, say at vertex $x$.  Further, $C$ is a shortest
  path $P$ between duplicates of $x$ in the graph $G_X$.  Since $C$ must
  separate $f$ and $g$, $C$ must enter $P_R$.  Starting at $x$ and
  walking along $C$ in either direction from $X$, let $a$ and $b$ be
  the first boundary vertices that $C$ reaches.  Let $\pi_{ab}$ be the
  $a$-to-$b$ subpath of $C$ that does not cross $X$.  Since $C$ is a
  shortest path in $G_X$, $\pi_{ab}$ is the shortest path between
  boundary vertices as given in the theorem.

  Let $\pi_x$ be the $a$-to-$b$ subpath of $C$ that does cross $X$.
  By definition of $a$ and $b$, $\pi_x$ contains no vertices of $P_R$
  except $a$ and $b$.
  Further, $\pi_x$ must be the shortest such path, as otherwise, $C$ would not be the shortest
  $fg$-separating cycle. Note that every path from $a$ to $b$ that does not
  contain other vertices from $P_R$ has to cross $X$ at least once.
  Therefore, $\pi_x$ must correspond to the edge $ab$ in $\extDDG$.
\end{proof}

\begin{figure}[ht]
   \centering
   \includegraphics[scale=0.8]{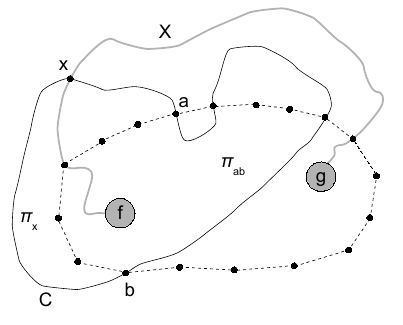}
   \caption{The shortest external cycle $C$ that separates faces $f$ and $g$.
   The figure illustrates the proof of Lemma~\ref{lemma-single-edge}.}
   \label{fig:external}
 \end{figure}

\section{Finding region subpieces}\label{sec:RegionSubpieces}

In Section~\ref{sec:overview}, we defined the region subpieces of a
piece as the intersection between a region and a piece
(Figure~\ref{fig:region-subpiece}).  In this section, we show how to
identify the region subpieces and the edges that are in them so that
we may use the min-separating cycle algorithm presented in the
previous section.  We
start by identifying the set of regions $\mathcal R_P$ whose
corresponding region subpieces of piece $P$ each contain a pair of
unseparated faces (Section~\ref{sec:ident-region}).  For each region
$R \in \mathcal R_P$ we initialize the corresponding region subpiece
$P_R$ as an empty graph.  For each edge $e$ of $P$ we determine to
what region subpieces $e$ belongs using lowest common-ancestor and
ancestor-descendent queries in the region tree
(Section~\ref{sec:ident-edges}).
We show how to do all this in $O(|P|
\log^2 n)$ time (Section~\ref{sec:SharedEdges}), proving Theorem~\ref{thm:RegionSubpiece}.

\subsection{Identifying region subpieces}\label{sec:ident-region}

Since each edge is on the boundary of two faces, we start by marking
all the faces of $G$ that share edges with $P$ in $O(|P|)$ time.  Since
a pair of unseparated faces in $P$ are siblings in the region tree, we
can easily determine the regions that contain unseparated faces. So,
in $O(|P|)$ time we can identify $\mathcal R_P$, the set of regions with
unseparated faces in $P$.  The intersection of a bounding cycle of a
region in $\mathcal R_P$ and $P$ are subpaths between pairs of
boundary vertices of $P$.  We call these paths {\em cycle paths}.  We
will need the following bound on the size of $\mathcal R_P$ in our
analysis, illustrated in Figure~\ref{fig:bound}.

\begin{figure}[ht]
  \centering
  \includegraphics[scale=0.5]{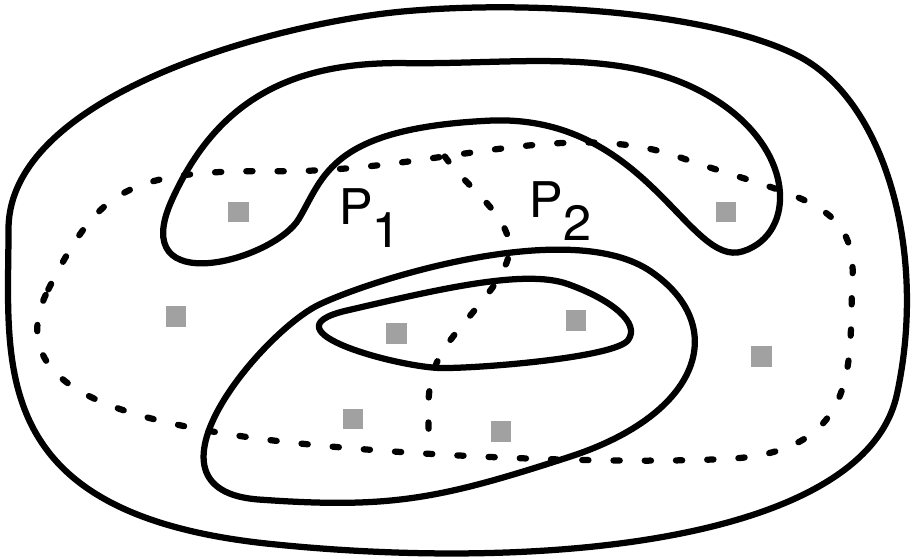}
  \caption{A piece $P$ is given by the boundaries (dotted) by its two
    child pieces $P_1$ and $P_2$.  $\mathcal R_P$ is a nesting set
    (solid cycles), each containing a pair of unseparated faces
    (grey).  Since these faces must be separated by the child pieces,
    each bounding cycle (except for one outer cycle) in $\mathcal R_P$
    must cross the dotted lines, resulting in a bound on $|\mathcal
    R_P|$.}
  \label{fig:bound}
\end{figure}

\begin{lemma}\label{Lem:Root-r-subproblems}
  $|\mathcal R_P| = O(|\partial P_1\cup\partial P_2|)$.
\end{lemma}
\begin{proof}
  Let $B$ be the set of boundary vertices in $P_1$ and $P_2$. Let $\mathcal F_1$ be
  the set of faces containing edges of $P_1$ and not
  edges of $P_2$ and let $\mathcal F_2$ be the set of faces
  containing edges of $P_2$ and not edges of $P_1$.

  Any region $R\in\mathcal R_P$ must contain at least
  one face of each $\mathcal F_1$ and $\mathcal F_2$. So, if $C$ is
  the cycle bounding $R$, either $\intc{C}$ contains $B$ or $C$
  {\it crosses} $B$, i.e., crosses the bounding faces of $P_1$ or $P_2$.

  If $\intc{C}$ contains $B$ then we claim that no
  other cycle bounding a region in $\mathcal R_P$ has this property.
  To see this, suppose for the sake of contradiction that there is
  another such cycle $C'$ bounding a region $R'\in\mathcal R_P$.
  The cycles have to nest, i.e., either $\intc{C}\subset\intc{C'}$ or
  $\intc{C'}\subset\intc{C}$. Assume w.l.o.g.\ the former. Then all
  faces of $\mathcal F_1$ are contained in the interior of a face of
  $R'$. But this contradicts the assumption that $R'$ contains at
  least one face from $\mathcal F_1$.

  We may therefore restrict our attention to regions $R \subseteq
  \mathcal R_P$ whose bounding cycle $C$ crosses $B$.

  Consider two vertices $u$ and $v$ in $B$ we and let $\mathcal R_P^{u,v}$ be the
  set of regions with cycle-paths from $u$ to $v$. By Lemma~\ref{Lem:Isometric}
  we know that if two isometric cycles share two vertices, then they share a path
  between these vertices as well. Hence, all cycles bounding regions in  $\mathcal R_P^{u,v}$
  share a path $\pi$ from $u$ to $v$. As a result at most two of the regions
  in $\mathcal R_P^{u,v}$ can contain a face adjacent to $\pi$.

  By Lemma~\ref{Lem:Isometric} no bounding cycles of regions in $\mathcal R_P$ cross, so
  each cycle going through non-consecutive boundary vertices splits the set of boundary
  vertices into two parts -- inside and outside.  By a ``chocolate-breaking''
  argument there cannot be more than $|B|-1$ such pairs of non-consecutive vertices used by cycles.
  Moreover, there are no more than $|B|$ consecutive pairs possible. As argued
  above each of these pairs cannot be used by more than two regions that contain a vertex
  inside $P$, so there are no more than $4|B|+1 = O(|\partial P_1\cup\partial P_2|)$ regions in $\mathcal R_P$.
\end{proof}

\subsection{Identifying edges of region subpieces}\label{sec:ident-edges}

Region subpieces are composed of two types of edges: {\em internal
  edges} and {\em boundary edges}.  Let $R$ be a region and let $C$ be
the bounding cycle of $R$.  An edge $e$ is an internal edge of a
region subpiece $R$ if the faces on either side of $e$ are enclosed by
$C$.  An edge $e$ is a boundary edge of $R$ if $e$ is an edge of $C$.
Every edge is an internal edge for exactly one region subpiece and we
can identify this region (Lemma~\ref{Lem:EdgeLCA}).  We can
also determine if an edge is a boundary edge for some region
(Lemma~\ref{Lem:ExtFaceLCA}).  While an edge can be a boundary edge
for several region subpieces, we can bound the potential blow-up in
running time due to this (Section~\ref{sec:SharedEdges}).

\begin{lemma}\label{Lem:EdgeLCA}
  Let $e$ be an edge of $G$ and let $f_1$ and $f_2$ be the faces
  adjacent to $e$.  Then $e$ is an internal edge of a region $R$ iff
  $R$ is the lowest common ancestor of $f_1$ and $f_2$ in the region
  tree.
\end{lemma}

\begin{proof}
  There must exist some region $R$ for which $e$ is an internal edge. Let $C$ be
  its bounding cycle.  Then both $f_1$ and $f_2$ are contained in
  $\intc{C}$ and it
  follows that $R$ must be a common ancestor of $f_1$ and $f_2$. If
  $R'$ is another common ancestor and $R$ is an ancestor of $R'$,
  then $R'$ is contained in a face of $R$, so $e$ cannot belong to
  $R$, a contradicting the choice of $R$.
\end{proof}

Iterating over each edge $e$ of $P$, we can identify the region $R$
for which $e$ is an internal edge and, if $R \in \mathcal R_P$, the
corresponding region subpiece $P_R$.  The total time required is $O(|P|
\log n)$.

\begin{lemma}\label{Lem:ExtFaceLCA}
  Let $e$ be an edge of $G$ and let $f_1$ and $f_2$ be the faces
  adjacent to $e$.  Let $R'$ be the lowest common ancestor of $f_1$ and
  $f_2$ in the region tree.  Then $e$ is a boundary edge of a region $R$
  iff $R$ is a descendant of $R'$ and exactly one of $f_1,f_2$ is a descendant
  of $R$.
\end{lemma}
\begin{proof}
  Assume first that $e\in C$, where $C$ is the cycle bounding
  $R$. Then w.l.o.g.\ $f_1\in\intc{C}$ and $f_2\in\extc{C}$. Then
  $f_1$ is a descendant of $R$ and, since $f_2$ is not, $R$ must be a
  descendant of $R'$.

  Now assume that $R$ is a descendant of $R'$ and that, say, $f_1$ is a descendant of $R$. Then $f_2$ is not a
  descendant of $R$ since otherwise, $R'$ could not be an ancestor of $R$. This implies that $e\in C$.
\end{proof}

Let $R$ be a region in $\mathcal R_P$ and let $C$ be the bounding
cycle of $R$.  Let $P_1$ and $P_2$ be the children of $P$ and let $B$
be the union of boundary vertices of $P_1$ and $P_2$. 
Consider the following algorithm to find starting points of cycle paths.
\begin{description}
\item[Cycle path starting points identification algorithm]
  Pick a boundary vertex $u\in B$.  For every edge $e$ adjacent to $u$
  (there are at most three such edges), check to see if $e$ is a
  boundary edge of $R$.  If there is
  no such edge, then there is no cycle path through $u$.
  Otherwise, mark $e$ as a starting point of a cycle path for $R$.
  Repeat this process for every vertex in $B$.
\end{description}
Using Lemma~\ref{Lem:ExtFaceLCA} and the top tree, this process
takes $O(|B|\log n)$ time for each region $R$ since a constant number of
tree queries for every vertex in $B$. By
Lemma~\ref{Lem:Root-r-subproblems}, repeating this
for all regions in $\mathcal R_P$ takes $O(|B|^2\log n)$ time.

After identifying starting points for cycle-path we can find all edges belonging to
them using {\it linear search}, i.e., the next edge on the cycle $C$ is found by
looking at the endpoint of the previous edge and checking which of the two remaining
edges belongs to $C$.  If the cycles are edge-disjoint over all regions $R\in\mathcal R_P$,
then the cycle paths will also be edge-disjoint. In such a case the time to find all the
region subpieces using linear search is $O((|B|^2 + |P|)\log n)$.
However, the cycles are not necessarily edge disjoint.  We overcome this complication
in the next section.

\subsection{Efficiently identifying boundaries of region
  subpieces}\label{sec:SharedEdges}

Since cycles will share edges, the total length of cycle paths over
all cycles can be as large as $O(|P|^{3/2})$.  However, we can maintain
the efficiency of the cycle path identification algorithm by using a
compact representation of each cycle path.  The compact representation
consists of edges of $P$ and {\em cycle edges} that represent paths in
$P$ shared by multiple cycle paths.

View each edge of $G$ as two oppositely directed darts and view the
cycle bounding a region as a clockwise cycle of darts.  The following is
a corollary of Lemma~\ref{Lem:Isometric}.

\begin{corollary}\label{Lem:SharedEdges}
  If two isometric cycles $C$ and $C'$ of $G$ share a dart, then
  either $\intc{C}\subseteq\intc{C'}$ or $\intc{C'}\subseteq\intc{C}$.
\end{corollary}

Let $\mathcal F$ be the forest representing the ancestor/descendant
relationship between the bounding cycles of regions in $\mathcal
R_P$. By Lemma~\ref{Lem:Root-r-subproblems}, there are $O(\sqrt r)$
bounding cycles and, since we can make descendent queries in the region
tree in $O(\log
n)$ time per query, we can find $\mathcal F$ in $O(r\log n)$ time. Let $d$ be
the maximum depth of a node in $\mathcal F$ (roots have depth
$0$). For $i = 0,\ldots,d$, let $\mathcal C_i$ be the set of cycles
corresponding to nodes at depth $i$ in $\mathcal F$. By
Corollary~\ref{Lem:SharedEdges} and
Lemma~\ref{Lem:SharedEdges}:
\begin{corollary}\label{Lem:DisjointCycles}
  For any $i\in\{0,\ldots,d\}$, every pair of cycles in $\mathcal C_i$
  are pairwise dart-disjoint.
\end{corollary}

\subsubsection*{Bottom-up algorithm}

We find cycle paths for cycles in $\mathcal C_d$, then $\mathcal
C_{d-1}$, and so on.  The cycles in $\mathcal C_d$ are dart disjoint,
so any edge appears in at most two cycles of $\mathcal C_d$.  We find
the corresponding cycle paths using the cycle path identification
algorithm in near-linear time.  While
Corollory~\ref{Lem:DisjointCycles} ensures that the cycles in
$\mathcal C_d$ are mutually dart-disjoint, they can share darts
with cycles in $C_{d-1}$.  In order to efficiently walk along subpaths
of cycle paths $Q$ that we have already discovered, we use a balanced binary
search tree (BBST) to represent $Q$. We augment the BBST to store in each node
the length of the subpath it represents. Now, given two nodes in $Q$, the
length of the corresponding subpath of $Q$ can be determined in logarithmic time.

\begin{figure}[ht]
  \centering
  \includegraphics[scale=0.4]{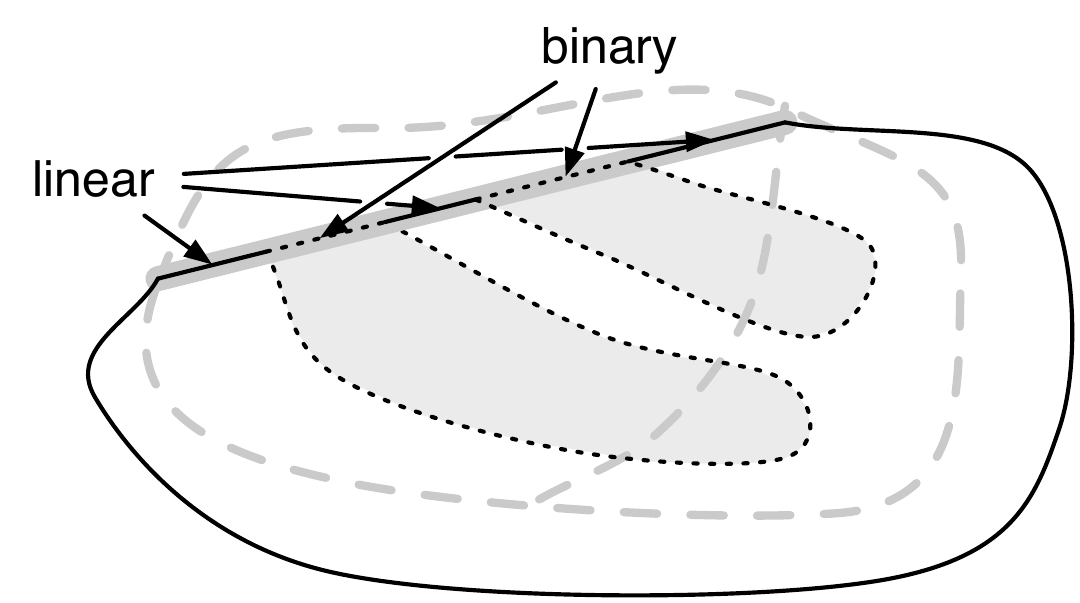}
  \caption{Finding a cycle path (highlighted straight line) for a
    cycle $C \in \mathcal C_{d-1}$ between boundary vertices of $P_1$ and
    $P_2$ (grey dashed lines) is found by alternating linear (solid)
    and binary (dotted) searches.  Binary searches corrrespond to
    cycle paths of region subpieces (shaded) bounded by cycles in
    $\mathcal C_d$.}
  \label{fig:alt-search}
\end{figure}

To find the cycle paths of a cycle $C\in \mathcal C_{d-1}$ that bounds
a region $R$, we emulate the cycle path identification algorithm:
start walking along a cycle path $Q$ of $C$, starting from a vertex of
$B$, and stop if you reach an edge $e = uv$ that has already been
visited {\em (linear search)}.  In this case, $e$ must be an edge of a
cycle path $Q'$ of a cycle $C' \in \mathcal C_d$.  By
Lemma~\ref{Lem:Isometric}, the intersection of $Q$ and $Q'$ is a
single subpath and so we can use the BBST to find the last vertex $w$
common to $Q$ and $Q'$ {\em (binary search)}.  We add to $P_R$ an edge
$uw$ of length equal to the length of the $u$-to-$w$ subpath of $Q$ to
compactly represent this subpath. If $w \in B$, we stop our walk along
$Q$.  Otherwise we continue walking (and adding edges to the
corresponding region subpiece) in a linear fashion, alternating
between linear and binary searches until a boundary vertex is
reached.  See Figure~\ref{fig:alt-search}.

We have shown how to obtain region subpieces for cycles in $\mathcal
C_d$ and in $\mathcal C_{d-1}$. In order to repeat the above idea to
find cycle paths for cycles in $\mathcal C_{d-2}$, we need to build
BBSTs for cycle paths of cycles in $\mathcal C_{d-1}$.  Let $Q$ be one
such cycle path.  $Q$ can be decomposed into subpaths $Q_1Q_1'\cdots
Q_kQ_k'$, where $Q_1,\ldots,Q_k$ are paths obtained with linear
searches and $Q_1',\ldots,Q_k'$ are paths obtained with binary
searches (possibly $Q_1$ and/or $Q_k'$ are empty). To obtain a binary
search tree $\mathcal T$ for $Q$, we start with $\mathcal T$ the BBST
for $Q_1$.  We extract a BBST for $Q_1'$ from the BBST we used to find
$Q_1'$ and merge it into $\mathcal T$. We continue merging with BBSTs
representing the remaining subpaths.

Once BBSTs have been obtained for cycle paths arising from
$\mathcal C_{d-1}$, we repeat the process for cycles in $\mathcal
C_{d-2},\ldots,\mathcal C_0$.

\subsubsection*{Running time}
We now show that the bottom-up algorithm runs in $O((|B|^2 + |P|)\log^2n)$ time
over all region subpieces, proving Theorem~\ref{thm:RegionSubpiece}. We have already described how to identify
boundary vertices that are starting points of cycle paths within this time bound.
It only remains to bound the time required for linear and
binary searches and BBST construction.

A subpath identified by a linear search consists only of edges that
have not yet been discovered.  Since each step of a linear search
takes $O(\log n)$ time, the total time for linear searches is $O(|P|\log
n)$.

The number of cycle paths corresponding to a cycle $C$ is bounded by the
number of boundary vertices, $O(|B|)$.  We consider three types of
cycle paths.  Those where
\begin{enumerate}
\item all edges are shared by a single child of $C$ in $\mathcal F$,
\item no edges are shared by a child, and
\item some but not all edges are shared by a single child.
\end{enumerate}
Cycle paths of the first type are identified in a single binary search
which, by Lemma~\ref{Lem:Root-r-subproblems}, sums up to a
total of $O(|B|^2)$ binary searches over all cycles $C\in\mathcal
F$.  Cycle paths of the second type do not require binary search.  For
a cycle path $Q$ in the third group, $Q$ can only share one subpath
with each child (in $\mathcal F$) cycle by Lemma~\ref{Lem:Isometric};
hence, there can be at most two binary searches per child.  Summing over all such
cycles, the total number of binary searches is $O(|B|)$ by
Lemma~\ref{Lem:Root-r-subproblems}.

In total there are $O(|B|^2)$ binary searches.  Each BBST has $O(|P|)$ nodes.
In traversing the binary search tree, an edge is checked for
membership in a given cycle path using Lemma~\ref{Lem:ExtFaceLCA} in
$O(\log n)$ time.  Each binary search therefore takes $O(\log |P| \log
n) = O(\log^2 n)$ time so the total time spent performing binary
searches is $O(|B|^2\log^2n)$.

It remains to bound the time needed to construct all BBSTs.  We merge BBSTs $T_1$
and $T_2$ in $O(\min\{|T_1|,|T_2|\}\log(|T_1| + |T_2|\})) =
O(\min\{|T_1|,|T_2|\}\log n)$ time by inserting elements from the
smaller tree into the larger.

When forming a BBST for a cycle path of a cycle $C$, it may be
necessary to delete parts of cycle paths of children of $C$. By
Lemma~\ref{Lem:Isometric}, these parts intersect $\into{C}$ and will
not be needed for the remainder of the algorithm. The total number of
deletions is $O(|P|)$ and they take $O(|P|\log |P|)$ time to execute.  So,
ignoring deletions, notice that paths represented by BBSTs are pairwise
dart disjoint (due to Corollary~\ref{Lem:DisjointCycles}).  Applying
Lemma~\ref{Lem:Merge} with $k = \log n$ and $W = r$ then gives
Theorem~\ref{thm:RegionSubpiece}.

\section{Adding a separating cycle to the region
  tree}\label{sec:UpdateRegionTree}

Above, we showed how to find a compact representation of a minimum cycle $C$ separating a pair of faces in a region $R$.
This cycle should be added to the basis we are constructing and in this section, we show how to update the region tree
$\mathcal T$ accordingly. As in the previous section, let $P_R$ be the
region subpiece $P\cap R$ of piece $P$ associated with region $R$.

When $C$ is added to the partial basis, $R$ is split into two regions,
$R_1$ and $R_2$.  Equivalently, in $\mathcal T$, $R$ will be replaced
by two nodes $R_\ell$ and $R_r$. The children $\mathcal F$ of $R$ will
be partitioned into children $\mathcal F_\ell$ of $R_\ell$ and
$\mathcal F_r$ of $R_r$.  Define $R_\ell$ to be the region as defined
by the children of $R$ that are contained to the left of $C$ (and
symmetrically define $R_r$).  We describe an algorithm that finds
$\mathcal F_\ell$ and detects whether $\mathcal F_\ell$ is contained
by $\intc{C}$ or $\extc{C}$.  Finding $\mathcal F_r$ is symmetric.
The algorithms take $O(|\mathcal F_\ell|\log^3n + (|P_R| + |\partial
P_R|^2)\log n)$ and $O(|\mathcal F_r|\log^3n + (|P_R| + |\partial
P_R|^2)\log n)$ time and so we can identify the smaller side of the
partition in $O(\min\{|\mathcal F_\ell|,|\mathcal F_r|\}\log^3n+
(|P_R| + |\partial P_R|^2)\log n)$ time, as required for
Theorem~\ref{thm:updateregiontree}.

Given the smaller side of the partition, we use cut-and-link
operations to update $\mathcal T$ in $O(\min\{|\mathcal
F_\ell|,|\mathcal F_r|\}\log n)$ additional time, thus proving
Theorem~\ref{thm:updateregiontree}.  See
Figure~\ref{fig:region-tree-update} for an illustration.  Assume,
w.l.o.g., that $\mathcal F_\ell$ is the smaller set. If $\mathcal
F_\ell$ is contained by $\intc{C}$ then update $\mathcal T$ by:
cutting the edges between $R$ and each element in $\mathcal F_\ell$,
linking each element in $\mathcal F_\ell$ to $R_\ell$, making $R$ the
parent of $R_\ell$, identifying $R$ with $R_r$. If $\mathcal F_\ell$
is contained by $\extc{C}$ then update $\mathcal T$ by: cutting the
edges between $R$ and each element in $\mathcal F_\ell$, linking each
element in $\mathcal F_\ell$ to a new node $u$, making $u$ the parent
of $R$; identifying $R$ with $R_\ell$ and $u$ with $R_r$.

\subsection{Partitioning the faces}

$R$ is represented compactly: vertices in $G[R]$ of degree 2 are
removed by merging the adjacent edges creating {\em super edges}.
Each super edge is associated with the first and last edge on the
corresponding path.  In addition to partitioning the faces, we must
find the compact representation for the two new regions, $R_\ell$ and
$R_r$.

\begin{figure}[ht]
  \centering
  \includegraphics[scale=0.4]{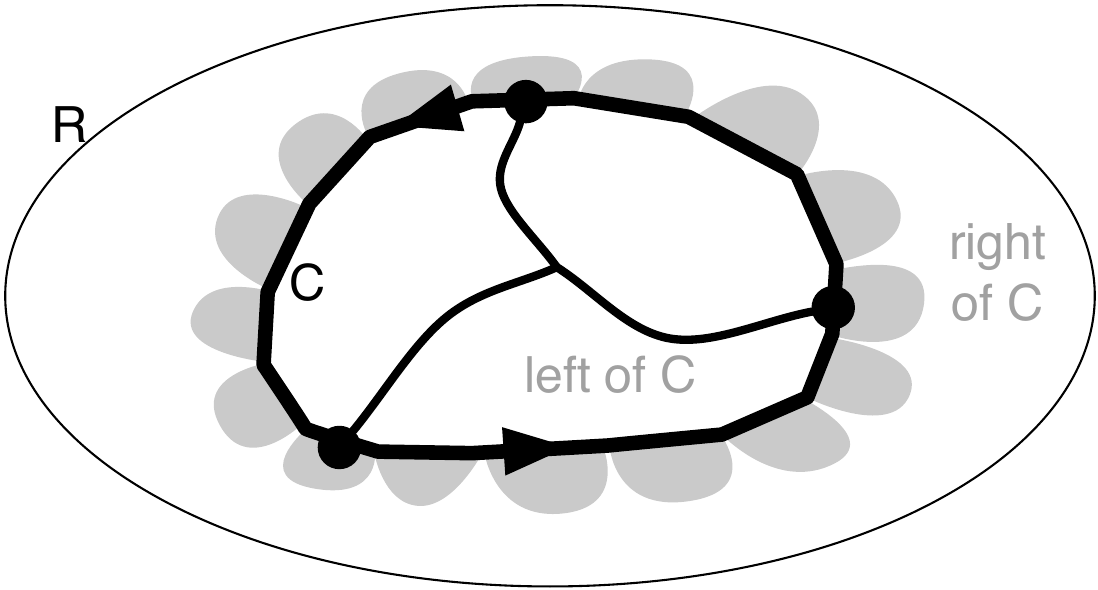}
  \caption{If $C$ (bold cycle) is a counterclockwise cycle, then
    $R_\ell$ is contained by $\intc{C}$. The children of $R$ (boundary
    given by thin cycle) adjacent and to the right of $C$ are grey.
    In this example,
    the edges to the left of $C$ (and not on $C$) will never reach a
    boundary edge of $R$: therefore the left of $C$ forms $\intc{C}$.
    Vertices of $L$ are given by dark circles.}
  \label{fig:partition-faces}
\end{figure}

The algorithm for finding $\mathcal F_\ell$ starts with an empty set and
consists of three steps:
\begin{description}
\item[Left root vertices] Identify the set $L$ of vertices $v$ on $C$
  having an edge emanating to the left of $C$; also identify, for each
  $v\in L$, the two edges on $C$ incident to $v$ (in $G$, not the
  compact representation).  (Details will be given in
  Section~\ref{subsec:Step1}.)
\item[Search] Start a search in $R$ from each vertex of $L$.  During
  this search, avoid
  edges of $C$ and edges that emanate to the right of $C$.
  For each super edge
  $\hat e$ of $R$ visited, find the first (or last) edge $e$ on the
  path represented by $\hat e$. 
\item[Add] For each searched edge and for each pair of faces $f_1,f_2$
  adjacent to this edge, find the
  two children of $R$ in $\mathcal T$ having $f_1$ and $f_2$ as
  descendants, respectively. Add those nodes that are also descendants
  of $\mathcal F$ to $\mathcal F_\ell$. 
\end{description}

This algorithm correctly builds $\mathcal F_1$: The algorithm visits
all super edges $\hat e$ that are strictly inside $R$ and on the left
side of $C$.  Let $A_1$ and $A_2$ be the children of $R$ that are
added corresponding to edge $e$.  $A_i$ is a region or a face of $G$: let
$C_i$ be the bounding cycle.  Since $f_i$ is a descendent of $A_i$,
$f_i$ is contained by $\intc{C_i}$.  Since $e$ is in $C_1$ and $C_2$:
so must $\hat e$. $A_1$ and $A_2$ are therefore the child regions of $R$ on either
side of $\hat e$.

The algorithm can easily determine if $\mathcal F_1$ is contained by
$\intc{C}$ or $\extc{C}$, by noting whether a searched edge ever
belongs to the cycle bounding $R$.  Given a searched edge $e$ and
adjacent edges $f_1$, we can determine whether $e$ is in the bounding
cycle of $R$ (Lemma~\ref{Lem:ExtFaceLCA}).  The search can only the
boundary of $R$ if $\mathcal F_\ell$ is contained by $\extc{C}$.  (See Figure~\ref{fig:partition-faces}.)

\subsubsection*{Analysis}
The above-described algorithm can be implemented in $O(|\mathcal
F_\ell|\log^3n + (|P_R| + |\partial P_R|^2)\log n)$ time. Finding the left-root
vertices is the trickiest part; while $|L| = O(|\mathcal F_\ell|)$,
$|L|$ could be much smaller than the
number of vertices in $C$, even in the compact representation. We give
details in Section~\ref{subsec:Step1}.  Assuming that left-root vertices
can be found quickly, we analyze the remaining steps.

\begin{lemma}
The number of super edges in $G[R]$ that are searched inside $C$ is $O(|\mathcal F_\ell|)$.
\end{lemma}
\begin{proof}
  Since $G$ is degree three and all faces and isometric cycles in $G$
  are simple, the compact representation of $R$ is also degree three.
  Since there are no degree-two vertices, $G[R]$ is 3-regular.
  Therefore $|E(G[R])| = \frac{3}{2}|V(G[R])|$.  Since, by Euler's
  formula, $|V(G[R])|-|E(G[R])|+|F_\ell| = 1$ we get, $|E(G[R])|= 3(|F_\ell|-1)$.
\end{proof}

\paragraph{Search step}
As we can identify if an edge belongs to $C$
(Lemma~\ref{Lem:ExtFaceLCA}) in $O(\log n)$ time, we can avoid edges
of $C$ during the search.  Since $G[R]$ is degree three, we will never
encounter edges emanating from the right of $C$.  The search can be
done by DFS or BFS in linear time, starting with vertices of $L$.
Given a super edge $\hat e$ found by this search, we find the first or
last edge $e$ (of $G$) on the path the super edge represents in $O(1)$
time, since $e$ is associated with $\hat e$.  The search takes
$O(|\mathcal F_\ell|\log n)$.

\paragraph{Add step}
Checking if $\mathcal F_\ell$ is contained by $\intc{C}$ or
$\extc{C}$) takes $O(\log n)$ time per searched edge
(Lemma~\ref{Lem:ExtFaceLCA}) given the adjacent faces $f_1$ and $f_2$
(which can be identified using the original graph).  Finding the
children of $R$ that are ancestors of $f_1$ and $f_2$ also takes
$O(\log n)$ time using the operations $\jump(R,f_1,1)$ and
$\jump(R,f_2,1)$ in the top tree for $\mathcal T$.  The total time
spent adding is $O(|\mathcal F_\ell|\log n)$.

\subsection{Finding left-root vertices}\label{subsec:Step1}

We show how to find the set $L$ of left-root vertices along $C$ in
$O(|\mathcal F_\ell|\log^3n + |C|\log n)$ time where $|C|$ is the number
of super edges in the compact representation of $C$.  Recall from
Section \ref{sec:SepFacePair} that $C$ has $O(|P_R| + |\partial P_R|^2)$
super edges and they are of three different types: edges in
$\extDDG$, edges in $\intDDG(P_R)$, and edges and cycle paths in $P_R$.  We
will show how to use binary search to prune certain super edges of $C$
that do not contain vertices of $L$.  We first assume that each super
edge is on the boundary of a child region (as opposed to a child face)
of $R$ that is to the left of $C$.  We relax this
assumption in Section~\ref{sec:face-assumption}.

The following lemma is the key to using binary search along $C$:
\begin{lemma}\label{Lem:BinSearch}
  Let $P$ be the shortest $u_1$-to-$u_2$ path in $G$ that is also a
  subpath of $C$. For $i = 1,2$, let $e_i$ be the edge of $P$ that is incident
  to $u_i$ and let $r_i$ be the child-region of $R$ that is left of
  $C$ and  bounded by $e_i$. Then $r_1 = r_2$ if and only if no
  interior vertex of $P$ belongs to $L$.
\end{lemma}
\begin{proof}
  The reverse direction is trivial.

  By our assumption, $r_1$ and $r_2$ are regions, not faces.  Their
  bounding cycles must therefore be isometric.  If $r_1 = r_2$, then
  by Lemma~\ref{Lem:Isometric}, $P$ is a subpath of the boundary of
  $r_1$: no interior vertex of $P$ could belong to $L$ in this
  case. This proves the forward direction.
\end{proof}

\subsubsection*{Shortest path covering}\label{subsubsec:SPCovering}
In order to use Lemma~\ref{Lem:BinSearch}, we cover the left-root vertices
of $C$ with two shortest paths $P$ and $Q$.  Let $r$ be a vertex that
is the endpoint of a super edge of $C$.  Since $C$ is isometric, there
is a unique edge $e$ such that $C$ is the union of $e$ and two
shortest paths $P'$ and $Q'$ between $r$ and the endpoints of $e$.
Note that $e$ could be in the interior of a super edge of $C$.  The
paths $P$ and $Q$ that we use to cover $L$ are prefixes of $P'$ and $Q'$.

To find $e$, we first find $\hat e$, the super edge that contains $e$.
Since $P$ and $Q$ are shortest paths and shortest paths are
unique, the weight of each path is at most half the weight of the
cycle.  To find $\hat e$, simply walk along the super edges of $C$ and
stopping when more than half the weight is traversed: $\hat e$ is the
last super edge on this walk.

Given $\hat e$, we continue this walk according to the type of super
edge that $\hat e$ is.  If $\hat e$ corresponds to a cycle path, then,
by definition, all the interior vertices of $\hat e$ have degree two
in $R$ and so cannot contain a left-root vertex; there is no need to
continue the walk. $P$ and $Q$ are simply the paths along $C$ from $r$
to $\hat e$'s endpoints.  This takes $O(|C|)$ time.

If $\hat e$ is an edge of $\intDDG(P_R)$ or $\extDDG$, we continue the
walk.  We describe the process for $\intDDG(P_R)$ as $\extDDG$ is
similar: we continue the walk started above through the subdivision
tree of $P_R$ that is used to find $\intDDG(P_R)$.  $\hat e$ is given
by a path of edges in the internal dense distance graph of $P_R$'s
children in the subdivision tree.  We may assume that we have a top
tree representation of the shortest path tree containing this path and so we can find the child super
edge $\hat e_c$ that contains $e$; using binary search this takes $O(\log^2
n)$ time.  Recursing through the subdivision tree finds a cycle path
or edge that contains $e$ for a total of $O(\log^3 n)$ time.

When we are done, $P$ and $Q$ are paths of super edges from  $\extDDG$
or $\intDDG(P_R)$.   $P$ and $Q$ each have $O(|C| + \log n)$ super
edges and they are found in $O(|C|+\log^3n)$ time.

\subsubsection*{Building $L$}
Using Lemma~\ref{Lem:BinSearch}, we will decompose $P$ into maximal
subpaths $P_1,\ldots,P_k$ such that no interior vertex of a
subpath belongs to $L$. Each subpath $P_i$ will be associated with the
child region of $R$ to the left of $C$ that is bounded (partly) by
$P_i$.  We repeat this process for $Q$ and find $L$ in $O(k)$ time by
testing the endpoints of the subpaths.

Let $\hat e$ be one of the $O(|C| + \log n)$ super edges of $P$.
$\hat e$ is either an edge of $\extDDG$ or $\intDDG(P_R)$.  Suppose
$\hat e$ is in $\intDDG(P_R)$.  We can apply Lemma~\ref{Lem:BinSearch}
to the first and last edges on the path that $\hat e$ represents, and
stop if there are no vertices of $L$ in the interior of the path.
Otherwise, with the top tree representation of the shortest path tree containing the
shortest path representing $\hat e$, we find the midpoint of this path
and recurse.  If $\hat e$ is in $\extDDG$, the process is similar.
Adjacent subpaths may still need to be merged after the above process,
but this can be done in time proportional to their number.

How long does it take to build $L$?  Let $\hat e$ be a super edge
representing subpath $P_{\hat e}$ of $P$ and let $m$ be the number of
interior vertices of $P_{\hat e}$ belonging to $L$.  Then there are
$m$ leaves in the recursion tree for the search applied to $\hat
e$. We claim that the height of the recursion tree is $O(\log^2n)$.
Let $S$ be some root-to-leaf path in the recursion tree.  If $\hat e$
is in $\intDDG(P_R)$, $S$ is split into $O(\log n)$ subpaths, one for
each level of the subdivision tree; in each level, the corresponding
subpath is halved $O(\log n)$ times before reaching a single edge. If
$\hat e$ is in $\extDDG$, the search may go root-wards in the
subdivision tree but once we traverse down, we are in $\intDDG$ and
will thus not go up again. The depth of the recursion tree is still
$O(\log^2n)$.

At each node in the recursion tree, we apply two top tree operations
to check the condition in Lemma~\ref{Lem:BinSearch} and one top tree
operation to find the midpoint of a path for a total of $O(\log n)$
time.  The total time spent finding the $m$ vertices of $L$ in $Q$ is
$O(m\log^3n)$ time.   If $m = 0$, we still need $O(\log n)$ time to
check the condition in Lemma~\ref{Lem:BinSearch}. Summing over
all super edges of $P$, the time required to identify $L$ is
$O(|C|\log n + |L|\log^3n) = O(|C|\log n + |\mathcal F_\ell|\log^3n)$,
as desired.

\subsubsection*{Handling faces}\label{sec:face-assumption}
We have assumed that every child of $R$ incident to and left of $C$ is
a region, not a face.  Lemma~\ref{Lem:BinSearch} is only true for this
case: boundaries of faces need not be isometric, and so the
intersection between a face and shortest path may have multiple
components.  However, notice that after the triangulation of the
primal followed by the triangulation of the dual, every face $f$ of $G$ is bounded by a simple
cycle of the form $e_1P_1e_2P_2e_3P_3$ where $e_1$, $e_2$, and $e_3$
are edges and $P_1$, $P_2$, and $P_3$ are tiny-weight shortest paths (see Section~\ref{sec:defs}).
Call the six endpoints of edges $e_1,e_2,e_3$ the \emph{corners
of $f$}. We associate each edge of $f$ with the
path containing it among the six paths $e_1$, $e_2$, $e_3$, $P_1$, $P_2$, and $P_3$.

We present a stronger version of Lemma~\ref{Lem:BinSearch} which
implies the correctness of the left-root vertex finding algorithm even
when children of $R$ are faces, not regions:
\begin{lemma}\label{Lem:BinSearchElementary}
  Let $P$ be the shortest $u_1$-to-$u_2$ path in $G$ that is also a
  subpath of $C$. For $i = 1,2$, let $e_i$ be the edge on $P$ incident
  to $u_i$ and let $r_i$ be the child of $R$ to the left of $C$ and
  containing $e_i$.
\begin{enumerate}
\item If neither $r_1$ nor $r_2$ are faces, then $r_1 = r_2$ if and
  only if no interior vertex of $P$ belongs to $L$.
\item If exactly one of $r_1,r_2$ is a face, then some interior vertex
  of $P$ belongs to $L$.
\item If both $r_1$ and $r_2$ are faces and $r_1\neq r_2$, then some interior vertex
  of $P$ belongs to $L$.
\item If both $r_1$ and $r_2$ are faces, $r_1 = r_2$, and $e_1$ and $e_2$ are
  associated with different subpaths of $r_1$, then some interior vertex of $P$ is
  a corner of $r_1$ or belongs to $L$.
\item If both $r_1$ and $r_2$ are faces, $r_1 = r_2$, and $e_1$ and $e_2$ are
  associated with the same subpath of $r_1$, then no interior vertex of $P$ belongs to $L$.
\end{enumerate}
\end{lemma}
\begin{proof}
  Part~1 is Lemma~\ref{Lem:BinSearch} and parts~2 and~3 are trivial. For
  part~4, we may assume that $P$ is fully contained in the boundary of $r_1$
  since otherwise, some interior vertex of $P$ belongs to $L$. Since $e_1$ and
  $e_2$ are associated with different subpaths of $r_1$, it follows that
  some interior vertex of $P$ is a corner of $r_1$.
  For part~5, we may assume that $e_1\neq e_2$. Then $e_1$ and $e_2$ are
  on the same (tiny weight) shortest path in $r_1$ so $P$ must be
  contained in the boundary of $r_1$. It follows that no interior vertex of $P$
  belongs to $L$.
\end{proof}

Using Lemma~\ref{Lem:BinSearchElementary} instead of
Lemma~\ref{Lem:BinSearch}, our $L$-finding algorithm will also
identify corners of faces incident to $P$. Since
each face has only 6 corners but contributes at least two vertices to
$L$, this will not increase the asymptotic running time.

\subsection{Obtaining new regions}

While we have found the required partition of the children of $R$ and
updated the region tree accordingly, it remains to find compact
representations of the new regions $R_\ell$ and $R_r$.  Recall that we
only explicitly find one side of the partition, w.l.o.g., $\mathcal F_\ell$.

To find $R_\ell$, start with an initially empty graph. In the {\em search}
step, we explicitly find all the super edges of $R_\ell$ that are not on
the boundary of $C$.  Remove these edges from $R$ and add them to
$R_1$.  The remaining super edges are simply subpaths of $C$ between
consecutive vertices of $L$.  These edges can be added to $R_\ell$ {\em
  without} removing them from $R$.

The super edges left in $R$ are exactly those in $R_r$.  However,
there may be remaining degree-two vertices that should be removed by
merging adjacent super edges.  All such vertices, by construction,
must be in $L$, and so can be removed quickly.

That super edges are associated with the first and last edges on their
respective paths is easy to maintain given the above construction.
The entire time required to build the new compact representation is
$O(|\mathcal F_\ell|)$.

\section{Reporting min cuts}\label{sec:MinCutReport}

By Theorem~\ref{Thm:MinCutOracle}, we can report the weight of any
minimum $st$-cut in constant time.  We extend this to report a minimum
separating cycle for a given pair of faces in $G^*$ in time
proportional to the number of edges in the cycle. By duality of the
min cuts and min separating cycles, this will prove
Theorem~\ref{thm:mincuts}.

In this section, we do not assume that the graph is 3-regular.  The
edges added to achieve 3-regularity may increase the number of edges in a
cycle.  However, we can still compute $\mathcal T$, the region tree of
$G^*$, with the degree-3 assumption.  We will rely only on the
relationship between faces in $G^*$, which did not change in the
construction for the degree-3 assumption.  Since cycles in the min cycle
basis are boundaries of regions represented by $\mathcal T$, the region tree
also reflects the ancestor/descendant relationships between cycles in
the min cycle basis.

Recall that we view a cycle $C$ as a clockwise cycle of {\em darts}
(Section~\ref{sec:SharedEdges}).  It follows from
Lemma~\ref{Lem:ExtFaceLCA} that the set of darts
in $C$ that are not also in a cycle $C'$ that is an ancestor of $C$ forms a path (possibly equal
to $C$). Further, the set of darts in $C$ that are not also in {\em
  any} strict ancestor of $C$ also form a path, denoted $P(C)$. Using
the next lemma, we can succinctly represent any cycle using these
paths.  See Figure~\ref{fig:succinct-cycle} for an illustration.

\begin{lemma}\label{lem:cycle-decomp}
  Let $C = C_0, C_1, C_2, \ldots$ be the ancestral path to the root of
  $\mathcal T$ for $C$.  $C$ can be written as the concatenation of
  the path $P(C)$, prefixes of $P(C_1), P(C_2), \ldots, P(C_{k-1})$, a
  subpath of $P(C_k)$ and suffixes of $P(C_{k-1}), \ldots, P(C_2),
  P(C_1)$ and in that order.
\end{lemma}

\begin{figure}[ht]
  \centering
  \includegraphics[scale=0.4]{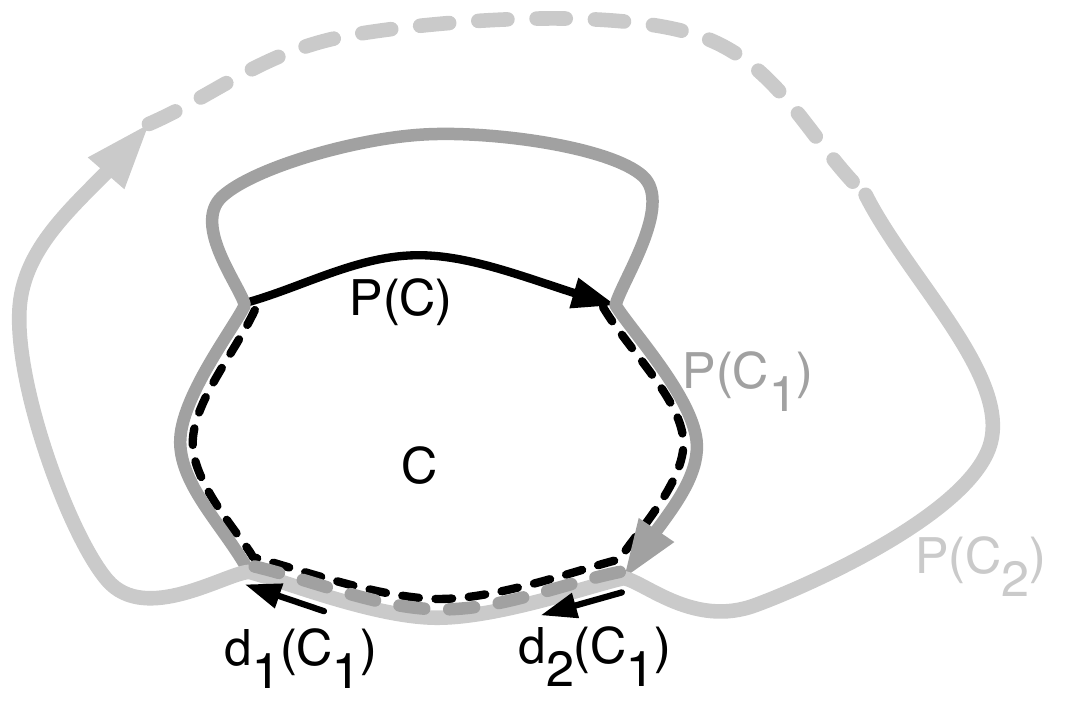}
  \caption{A succinct representation of a cycle in the min cycle
    basis.}
  \label{fig:succinct-cycle}
\end{figure}

\begin{proof}
  Let $C_k$ be the root-most cycle that shares a dart $d$ with $C$:
  $d$ is in $P(C_k)$.  By Lemma~\ref{Lem:ExtFaceLCA}, the intersection
  of $C$ with $C_k$ is a single path: it must be a subpath $P'$ of
  $P(C_k)$.  Let $Q = C \setminus P'$.  By the definition of
  $P(C_{k-1})$, the start of $Q$ must be the start of $P(C_{k-1})$ and
  the end of $Q$ must be the end of $P(C_{k-1})$.  The remainder of
  the proof follows with a simple induction.
\end{proof}

Let $d_1(C)$ be the last dart of $C$ before $P(C)$ and let $d_2(C)$ be
the first dart of $C$ after $P(C)$.  Suppose additionally that we
know, for every dart $d$, the cycle $C(d)$ for which $d \in P(C)$.

\subsection{Finding a min-separating cycle}

If we are given $d_1(C)$, $d_2(C)$, and $P(C)$ for every cycle (node)
in $\mathcal T$ and $C(d)$ for every dart $d$, we can find a minimum
$fg$-separating cycle $C$ in $O(|C|)$ time by the following procedure.
First we can find the node in $\mathcal T$ corresponding to $C$ in
$O(1)$ time using the oracle (Theorem~\ref{Thm:MinCutOracle}).  To
find $C$, walk along $C$, starting with $P(C)$, until you reach the
end.  Let $C_1 = C(d_2(C))$ and walk along $P(C_1)$ starting with
$d_2(C)$.  Suppose we are at dart $d$ along $C$.  Let $C_i = C(d)$.
Walk along $P(C_i)$ until either you reach its end or you hit
dart $d_1(C_{i-1})$.  In the first case, continue the process with
$d_2(C_i)$.  In the second case, continue the process with the first
dart of $P(C_{i-1})$.  By Lemma~\ref{lem:cycle-decomp}, this process
will eventually reach the start of $P(C)$.

\subsection{Preprocessing step}

It remains to show how to precompute $d_1(C)$, $d_2(C)$, $P(C)$ for
every cycle in $\mathcal T$ and $C(d)$ for every dart.  We find these
using the top tree representation of $\mathcal T$ with $O(n\log n)$
preprocessing time.

Let $f_\ell$ and $f_r$ be the faces to the left and right of a dart
$d$.  Then it follows easily from Lemma~\ref{Lem:ExtFaceLCA} and the
clockwise orientation $C(d)$ that $C(d)$ is the bounding cycle of
region $\jump(\lca{f_\ell}{f_r},f_r,1)$ and can be found in
$O(\log n)$ time.

We can easily construct $P(C)$ from the set of darts with
$C(d) = C$.  The ordering can be found just using the endpoints of
these darts so that we can walk along $P(C)$ as required in the
previous section.

To find $d_1(C)$ and $d_2(C)$, we work from leaf to root in $\mathcal
T$ as in the bottom-up algorithm of Section~\ref{sec:SharedEdges}.  We
will show how to find $d_2(C)$.  Finding $d_1(C)$ is symmetric.  For
cycle $C$ we can easily find the last dart $d_\ell$ of $P(C)$.
Consider the darts $d_o$ leaving the endpoint of $P(C)$ in
counterclockwise order, in the embedding, starting with the reverse of
$d_\ell$, we test if $d_o$ is on the boundary of $C$ using
Lemma~\ref{Lem:ExtFaceLCA} in $O(\log n)$ time.  As we test darts we
remove them from further consideration as they will be in the interior
of all ancestor cycles.  In total, this takes $O(n \log n)$ time.

\section{A detailed presentation of planar separators} \label{sec:sep}

In this Section, we show how to find separators satisfying
Definition~\ref{def:sep}, that is, we show how to ensure that Miller's
cycle separator theorem applied to a connected piece yields connected
subpieces.  We also give the proof of Theorem~\ref{thm:sumPiece} which
bounds the size of the pieces and the sum of the squares of their
boundaries in a recursive decomposition.  Although this result has
been required of previous results, specific details, to our knowledge,
is not anywhere else in the literature.  We have included a formal
proof of Theorem~\ref{thm:sumPiece} for completeness.

\subsection{Ensuring Connected Pieces}\label{section-dis}

Now, we show how to ensure that Miller's cycle separator theorem applied
to a connected piece yields connected subpieces. To find a cycle separator
$C$ of the desired size, faces need to be of constant size which we ensure
by triangulating the piece temporarily. Having found $C$, the triangulating
edges are removed and the remaining edges of $C$ induce a separation of the
piece into child pieces. The problem with this approach is that $C$ may cross
a hole multiple times by using the added triangulating edges, giving a
child piece consisting of multiple connected components.

To avoid this problem we triangulate the piece in such a way that $C$ uses
at most two triangulating edges from each hole. We triangulate each hole with
a star centered at a new vertex embedded inside the hole. (This introduces a
high-degree vertex, contradicting our constant-degree assumption, but as we only use
the triangulating edges to find the separator, we still have child
pieces that have constant degree.) The same is done for the external face of the
piece if that face is not a triangle.

Formally, let $P$ be a connected piece with holes and let $P'$ be the
triangulation of $P$ as described above.  Let $C$ be the simple cycle
separator.  Since
$C$ is simple it can use at most two triangulating edges per hole and both
of these edges (if any) are consecutive along $C$.  Let
$R_1$ and $R_2$ be the two closed regions of the plane bounded by $C$.
The child pieces are $P_1 = P \cap R_1$ and $P_2 = P \cap R_2$.  Note
that the child pieces share the edges of $C$ that are in $P$.  The
endpoints of these shared edges are separator vertices and so the size
of the decomposition is still bounded by
Lemmas~\ref{Lem:LevelPieceSize}
and~\ref{Lem:LevelPieceBoundariesSquared} and
Theorem~\ref{thm:sumPiece}.

Since $P$ is connected, the boundary of each hole is a (not
necessarily simple) cycle.  We argue that since a separating cycle $C$
uses at most two triangulating edges from each hole, both incident
to the star center, the child pieces
as defined are connected.  Consider, w.l.o.g., child piece $P_1$.
Suppose $C$ uses triangulating edges $(v_1,c)$ and $(c,v_2)$, where
$c$ denotes the star center, for a hole (or the external face) of $P$ with (connected)
boundary $H$. Let $H_1 = H \cap R_1$.  Since $C$ uses no other
triangulating edges from this hole, $H_1$ connects $v_1$ and $v_2$
and is in $P_1$.  The connectivity of $P_1$ then follows from the connectivity of $P$.

To find a recursive subdivision consisting of connected pieces using the approach in~\cite{INSW11} (which is based
on the approach in~\cite{FR06}), the vertex-weighted variant of Miller's cycle separator theorem is applied and we need to handle
the following three cases when applying the theorem to $P'$:
\begin{enumerate}
\item $P_1$ and $P_2$ should each contain at most a constant fraction of the vertices of $P$,
\item $P_1$ and $P_2$ should each contain at most a constant fraction of the boundary vertices of $P$,
\item $P_1$ and $P_2$ should each contain at most a constant fraction of the holes of $P$.
\end{enumerate}
The first resp.\ second case can be handled by distributing vertex weights evenly on the vertices
resp.\ boundary vertices of $P$ and assigning weight $0$ to newly
introduced 'hole' vertices.
For the third case, we distribute vertex weights evenly on the newly
introduced 'hole' vertices and assign weight
zero to all vertices of $P$. Alternating between these cases will
achieve all three properties within a few separations. (see~\cite{INSW11} and~\cite{FR06} for details.)

Note that the top-level piece $G$ is trivially connected. The above
combined with results from~\cite{INSW11} and from~\cite{FR06} then
imply that we can find a recursive subdivision in $O(n\log n)$ time
where all pieces are connected.

\subsection{Bounds on sizes of pieces and boundaries}

The running time of our algorithm depends on the total size of all the
pieces as well as on the sum of squares of piece boundary sizes. This
is also the case for the algorithm of Fakcharoenphol and Rao. They
make the simplifying assumption that a piece of size $r$ has $O(\sqrt
r)$ boundary vertices. Although their construction ensures that piece
sizes and boundary sizes go down geometrically along any root-to-leaf
path in the recursive subdivision tree, the two quantities need not go
down by the same rate since some applications of Miller's separator
theorem may give more unbalanced splits than others. Thus in their
construction, a piece of size $r$ may have more than $O(\sqrt r)$
boundary vertices. Theorem~\ref{thm:sumPiece} bounds the total size of
pieces as well as the sum of squares of boundary sizes.

We observe that only the new boundary vertices are replicated among
the child pieces and we get:
\begin{eqnarray}
  &\sum_i |P_i| \leq |P| +k \sqrt{|P|}\label{eq:2}\\
  &\sum_i|\partial P_i| \leq |\partial P| + k\sqrt {|P|},
  \label{eq:5}
\end{eqnarray}
where $k$ is some constant that depends on the constant in Miller's
Cycle Separator Theorem.

\begin{lemma}\label{Lem:LevelPieceSize}
  Let ${\mathcal P}_i$ be the set of pieces in level $i$ of a
  recursive subdivision of $G$.  Then $\sum_{P\in{\mathcal P}_i}|P| =
  O(n)$.
\end{lemma}

\begin{proof}
  Let $c_{\min}$ be the constant such that pieces of size at most
  $c_{\min}$ are not subdivided further in a recursive subdivision.
  We may assume that no piece has size less than $\frac 1 2 c_{\min}$.
  Let $L(r)$ denote the total number of vertices (counted with
  multiplicity) in the leaf-pieces of the recursive subdivision of an
  $r$-vertex piece $P$.  We will show that $L(r)\leq c_1r
  - c_2\sqrt r$ for suitable constants $c_1$ and $c_2$.  The lemma
  will follow since the number of vertices (counting multiplicity) in
  any level of the recursive subdivision is dominated by the number of
  vertices in the leaves; this is bounded by $L(n)$ which is $O(n)$.

  We prove that $L(r)\leq c_1r - c_2\sqrt r$ by induction.  In the
  base cases, in which a piece of size $r$ is a leaf of the recursive
  subdivision (so $r \in [\frac 1 2 c_{\min}, c_{\min}]$), $L(r) = r$.  This
  is bounded by $c_1 r -c_2 \sqrt r$ so long as $c_{\min} \leq
  c_1(\frac 1 2 c_{\min})-c_2 \sqrt{c_{\min}}$.  Setting $c_{\min}\geq
  \left( c_2/(\frac 1 2 c_1 - 1) \right)^2$ guarantees this.

  Now consider a piece $P$ of size $r > c_{\min}$.  Assume inductively
  that the claim holds for all values smaller than $r$. Let $r_i$ be
  the size of the $i^{th}$ child of $P$; $P$ has $N = O(1)$ children.
  By the inductive hypothesis we get:
  \begin{equation}
    L(r) = \sum_i L(r_i) \leq \sum_i \left(c_1 r_i - c_2
      \sqrt{r_i}\right) = c_1 \sum_i r_i - c_2 \sum_i \sqrt r_i
    \label{eq:1}
  \end{equation}
  We lower bound $\sum_i \sqrt r_i$ by observing that $\sum_i \sqrt
  r_i$ can only be as small as allowed by $r \leq \sum_i r_i$, $r_i
  \in [0, \frac r 2]$ (see Definition~\ref{def:sep}).
  The minimum value occurs when two of the $r_i$'s are equal to $\frac r 2$ and all
  others are zero, giving:
  \begin{equation}
    \sum_i \sqrt r_i \geq 2\sqrt{\frac r 2} = \sqrt{2r}\label{eq:3}
  \end{equation}
  Combining Equations~(\ref{eq:2}),~(\ref{eq:1}), and~(\ref{eq:3}), we
  get that $$L(r) \leq c_1(r+N\sqrt r)-c_2\sqrt{2r} = c_1 r - c_2(\sqrt
  2 - \frac{c_1}{c_2} N)\sqrt r$$
  This completes the induction for $c_2$ sufficiently larger than $c_1N$.
\end{proof}

\begin{lemma}\label{Lem:LevelPieceBoundariesSquared}
  Let ${\mathcal P}_i$ be the set of pieces in level $i$ of a
  recursive subdivision of $G$.  Then $\sum_{P\in{\mathcal
      P}_i}|\partial P|^2 = O(n)$.
\end{lemma}
\begin{proof}
We may assume that, by adding dummy boundary vertices that do not
contribute to children,
\begin{equation}
  \label{eq:4}
  |\partial P| \geq c \sqrt{|P|}\ \mbox{for every piece $P$,}
\end{equation}
where $c$ is a constant that we will pick below.
We will show that for any piece $P$ with children $P_1,\ldots,P_N$:
\begin{equation}
  \sum_j|\partial P_j|^2\leq|\partial P|^2.\label{eq:SumOfSquares}
\end{equation}
The lemma follows from this because, by summing over all pieces in a level we get,
\[
\sum_{P\in{\mathcal P}_i}|\partial P|^2 \leq \sum_{P\in{\mathcal
    P}_{i-1}}|\partial P|^2 \leq \cdots \leq \sum_{P\in{\mathcal
    P}_0}|\partial P|^2 = |\partial G|^2
\]
Since $G$ has only dummy boundary vertices, $|\partial G|^2 = c
(\sqrt{|G|})^2 = c |G|$, which is $O(n)$, as desired.

We now prove Equation~(\ref{eq:SumOfSquares}). In the next equation, the first and second
inequalities follow from Definition~\ref{def:sep} and
Equation~(\ref{eq:4}), respectively:
\begin{equation}
  \label{eq:6}
  |\partial P_j| \leq \frac 1 2 |\partial P| + \sqrt{|P|} \leq
  \left(\frac 1 2 + \frac 1 c \right) |\partial P|
\end{equation}
\begin{eqnarray*}
  \sum_j|\partial P_j|^2 & \leq &  \left(\frac 1
    2 + \frac 1 c \right) |\partial P|\sum_j |\partial P_j| \qquad \mbox{by
    Equation~(\ref{eq:6})} \\
  & \leq & \left(\frac 1
    2 + \frac 1 c \right) |\partial P| \left(|\partial P| + N
    \sqrt{|P|}\right) \qquad \mbox{by Equation~(\ref{eq:2})} \\
  & \leq & \left(\frac 1
    2 + \frac 1 c \right) |\partial P|\left(|\partial P| +N \frac{|\partial P|}{c}\right)
  \qquad \mbox{by Equation~(\ref{eq:4})} \\
  & = & \left(\frac 1
    2 + \frac 1 c \right)\left(1 + \frac{N}{c}\right)|\partial P|^2
  \\
  & \leq & |\partial P|^2 \qquad \mbox{for constant $c$ sufficiently large.}
\end{eqnarray*}
This completes the proof.
\end{proof}

Since the depth of the recursive subdivision is $O(\log n)$,
Lemmas~\ref{Lem:LevelPieceSize}
and~\ref{Lem:LevelPieceBoundariesSquared} imply
Theorem~\ref{thm:sumPiece}.

\section{Lexicographic-shortest paths} \label{sec:unique-short-paths}

In this section we show how to impose uniqueness of shortest paths by breaking ties in a consistent manner, deterministically.  This will prove:

\begin{theorem}\label{thm:lex}
  The algorithms of Theorems~\ref{Thm:MinCutOracle}
  through~\ref{Thm:CycleBasisExplicit} can be made deterministic with
  only an additional $O(\log^2n)$ factor in the preprocessing time.
\end{theorem}

\noindent Let $w:E\rightarrow\mathbb R$ be the weight function on the
edges of $G$. Index the vertices of $G$ from $1$ to $n$. For a
subgraph $H$, define $I(H)$ as the smallest index of vertices in
$H$. Hartvigsen and Mardon~\cite{HM94} showed that there is another
weight function $\tilde{w}$ on the edges of $G$ such that for any pair
of vertices in $G$, (i) there is a unique shortest path between them
w.r.t.\ $\tilde{w}$ and (ii) this path is also a shortest path w.r.t.\
$w$. Furthermore, for two paths $P$ and $P'$ between the same pair of
vertices in $G$, $\tilde{w}(P) < \tilde{w}(P')$ exactly when one of
the following three conditions is satisfied:
\begin{enumerate}
\item $w(P) < w(P')$.
\item $w(P) = w(P')$ and $|P| < |P'|$.
\item $w(P) = w(P')$, $|P| = |P'|$ and
      $I(P\setminus P') < I(P'\setminus P)$.
\end{enumerate}
A shortest path w.r.t.\ $\tilde{w}$ is called a \emph{lex-shortest path} and a shortest path tree w.r.t.\ $\tilde{w}$ is called
a \emph{lex-shortest path tree}.
The properties of $\tilde{w}$ allow us to use $\tilde{w}$ instead
of $w$ in our algorithm. In the following, we show how to do so
efficiently.

We first use a small trick from Hartvigsen and Mardon~\cite{HM94}: for
function $w$, we add a sufficiently small $\epsilon > 0$ to the weight
of every edge. This allows us to disregard the second condition
above. When comparing weights of paths, we may treat $\epsilon$
symbolically so we do not need to worry about precision issues. The
tricky part is efficiently testing the third condition.

We need to make modifications to every part of our algorithm in which
the weights of two shortest paths are compared. All such comparisons
occur when we (1)~use Fakcharonphol and Rao's variant of Dijkstra's
algorithm, FR-Dijkstra \cite{FR06} and (2)~find a shortest path
covering of an isometric cycle $C$ in
Section~\ref{subsubsec:SPCovering}.

\subsection{FR-Dijkstra}
Let us first adapt FR-Dijkstra to compute lex-shortest paths. The type
of shortest path weight comparisons in FR-Dijkstra are of the form
$D(u) + d(u,v) < D(u') + d(u',v)$, where $u$, $v$, $u'$, and $v'$ are
vertices, $D(u)$ and $D(u')$ are the distances from the root of the
partially built tree to $u$ and $u'$, respectively, and $d(u,v)$ and
$d(u',v)$ are the lengths or weights of edges $(u,v)$ and
$(u',v)$. Note that an edge can be an edge of $G$ (in which case
$d(u,v) = w(u,v)$) or be a cycle edge (Section~\ref{sec:SharedEdges})
or an edge of an external or internal dense distance graph (in which
case $d(u,v)$ is the length of the path the edge represents).

For simplicity, assume first that all edges considered by FR-Dijkstra
belong to $G$; we test whether $D(u) + d(u,v) < D(u') + d(u',v)$ as
follows. Let $T$ be the partially built shortest path tree rooted at a
vertex $r$ and let $Q$ and $Q'$ be the $r$-to-$u$ and $u'$ paths in
$T$, respectively. If the first two lex-shortest conditions are
inconclusive, we need to check if $I(Q\setminus Q') < I(Q'\setminus
Q)$.

Let $a$ be the least-common ancestor of $u$ and $u'$ in $T$.  Then $Q
\setminus Q'$ is the $a$-to-$u$ subpath of $Q$, excluding $a$. It
follows from this that, by representing $T$ as a top tree, we can find
the smallest index in the two sets in logarithmic time.  Using top
trees, we can also similarly handle a cycle edge $e$, by keeping the
smallest index of $e$'s interior vertices. These indices can be found
during the construction of region subpieces in
Section~\ref{sec:ident-region} without an increase in running time.

\subsection{Internal dense distances}\label{sec:FRint}
We also need to handle edges from internal and external dense distance graphs. Let us first consider the problem of computing
lex-shortest path trees in $\intDDG$. As before, we compute shortest path trees for pieces bottom-up. Let $P$ be a piece with
children $P_1$ and $P_2$ and assume that we have computed lex-shortest path trees in both of them. Assume also that every
edge in $\intDDG(P_1)\cup\intDDG(P_2)$ is associated with the smallest index of interior vertices on the path in $G$ that the
edge represents. This information can be computed bottom-up during the construction of $\intDDG$ without increasing running time.

Let $T$ be a partially-built shortest-path tree in $P$.  With the
above definitions, consider the problem of testing whether $D(u) +
d(u,v) < D(u') + d(u',v)$. Let $(a,u_a)$ and $(a,u_a')$ be the first
edges on $Q[a,u]$ and $Q'[a,u']$, respectively, with $a$ defined as
earlier.  Define $Q_G$ and $Q_G'$ as the paths in $G$ represented by
$Q$ and $Q'$, respectively. Let $i_1 = I(Q_G[u_a,u])$, $i_1' =
I(Q_G'[u_a',u'])$, $i_2 = I(Q_G[a,u_a]\setminus Q_G'[a,u_a'])$, and
$i_2' = I(Q_G'[a,u_a']\setminus Q_G[a,u_a])$. By definition of $a$,
$Q_G[u_a,u]$ and $Q_G'[u_a',u']$ are vertex-disjoint. We need to
compute these indices and check if $\min\{i_1,i_2\} <
\min\{i_1',i_2'\}$.

Each edge of $T$ belongs to $\intDDG(P_1)\cup\intDDG(P_2)$ and is thus associated with the smallest index of interior vertices
on the path in $G$ represented by the edge. Top tree operations on $T$ as above then allow us to find $i_1$ and $i_1'$ in
logarithmic time.

To find $i_2$ and $i_2'$, we consider two cases: $(a,u_a)$ and
$(a,u_a')$ belong to the internal dense distance graph for the same
child of $P$ or they belong to different graphs. In the first case, assume that, say, $(a,u_a),(a,u_a')\in\intDDG(P_1)$.
Then we can decompose these two edges into shortest paths in the same shortest path tree in $\intDDG(P_1)$ and we can recursively
find $i_2$ and $i_2'$. In the second case, assume that, say, $(a,u_a)\in\intDDG(P_1)$ and $(a,u_a')\in\intDDG(P_2)$. Since the
lex-shortest paths representing these edges in $\intDDG(P_1)$ and $\intDDG(P_2)$ are edge-disjoint and since $T$ is a partially
built lex-shortest path tree in $P$, $Q_G[a,u_a]$ and $Q_G'[a,u_a']$  share no vertices except $a$. Thus, $i_2$
is the smallest index of vertices in $V(Q_G[a,u_a])\setminus\{a\}$ and we can obtain this index in constant time from the index of
$u_a$ and the index associated with edge $(a,u_a)$ which is the smallest index of interior vertices on $Q_G[a,u_a]$. Similarly,
we can find $i_2'$ in constant time.

Since the subdivision tree has $O(\log n)$ height, the recursion depth of the above algorithm is $O(\log n)$, implying that we
can determine whether $D(u) + d(u,v) < D(u') + d(u',v)$ in $O(\log^2n)$ time. Hence, lex-shortest path trees in $\intDDG$ can
be computed in a total of $O(n\log^4n)$ time.

\subsection{External dense distances}
Computing lex-shortest path trees in $\extDDG$ within the same time bound is very similar so we only highlight the differences.
Having computed lex-shortest path trees in $\intDDG$ bottom-up, we compute lex-shortest path trees in $\extDDG$ top-down.
For a piece $P$, we obtain lex-shortest path trees from lex-shortest path trees in its sibling and parent pieces. We can then
use an algorithm similar to the one above to find lex-shortest path trees in $P$. At each recursive step, we either go up
one level in $\extDDG$ or go to $\intDDG$. It follows that the recursive depth is still $O(\log n)$ so
lex-shortest path trees in $\extDDG$ can be found in $O(n\log^4n)$ time.

\subsection{FR-Dijkstra in Reif's algorithm}
We also use FR-Dijkstra in Section~\ref{sec:SepFacePair} to emulate
Reif's minimum separating cycle algorithm.  First, we computed a
shortest path $X$ between two faces of the region subpiece using
FR-Dijkstra. With an algorithm similar to the one above, we can
instead compute a lex-shortest path between the two faces with an
$O(\log^2n)$ time overhead.  Next, we cut open the region subpiece
along this path.  The handling of external distances in the cut-open
graph does not change, but the internal distances are recomputed.  We
recompute these as in Section~\ref{sec:FRint}.

\subsection{Shortest path coverings}
In Section~\ref{subsubsec:SPCovering}, we gave an algorithm to find the unique edge $e = (u,v)$ on isometric cycle $C$ such that
the two shortest paths from a fixed vertex $r$ on $C$ to $u$ and to $v$ cover all vertices of $C$ and all edges except $e$.
We showed how to do this in $O(|C| + \log^3n)$ time, where $|C|$ is
the size of the compact representation of $C$ obtained in
Section~\ref{sec:SepFacePair}. We need to modify the algorithm to do so with respect to lex-shortest paths.

Recall that to find $e$, a linear search of the super edges of $C$
from $r$ was first applied to find the super edge $\hat e$ of $C$ such
that the shortest path in $G$ representing $\hat e$ contains $e$. As
above, we may assume that every super edge of $C$ is associated with
the smallest index of interior vertices on the path it
represents. Hence, by keeping track of the smallest interior vertex
index for super edges visited so far in the linear search as well as the
smallest interior vertex index for edges yet to be visited, we can
find $\hat e$ in $O(|C|)$ time w.r.t.\ lex-shortest paths.

Having found $\hat e$, we need to apply binary search on a path
representing $\hat e$ in a lex-shortest path tree. We do this by
first finding the midpoint of this path as in Section~\ref{subsubsec:SPCovering}. If the two halves have the same weight
and the same number of edges, we can use a top tree operation on each half to determine which half has the smallest index.
It follows that all binary searches to find $e$ take $O(\log^3n)$ time. The total time to find $e$ is thus $O(|C| + \log^3n)$,
which matches the time in Section~\ref{subsubsec:SPCovering}.

\bibliographystyle{plain}
\bibliography{all-gh}

\end{document}